\documentclass[sigconf, nonacm]{acmart}

\usepackage{pvldb}

\renewcommand\vldbdoi{XX.XX/XXX.XX}
\renewcommand\vldbpages{XXX-XXX}
\renewcommand\vldbavailabilityurl{URL_TO_YOUR_ARTIFACTS}

\usepackage{amsmath}
\usepackage{amsthm}
\usepackage{algorithmic}
\usepackage{textcomp}
\usepackage[linesnumbered,ruled,vlined]{algorithm2e}
\usepackage{setspace}
\usepackage{graphicx}
\usepackage{multirow}
\usepackage{multicol}
\usepackage{makecell}
\usepackage{enumitem}
\usepackage{array}
\usepackage{pifont}
\usepackage{diagbox}
\usepackage{caption}
\usepackage{xcolor}
\usepackage{colortbl}
\usepackage{natbib}
\usepackage{footmisc}
\usepackage{subcaption}
\usepackage{siunitx}
\usepackage{url}
\usepackage[table]{xcolor}
\definecolor{best}{HTML}{CCE5F2}  
\definecolor{worst}{HTML}{F2E0CC}  
\definecolor{rankone}{HTML}{f35e3d}   
\definecolor{ranktwo}{HTML}{FC8D59}    
\definecolor{rankthree}{HTML}{FDD49E}

\SetKwRepeat{Do}{do}{while}
\newcommand{\continue}{\textbf{continue}}

\newtheorem{myExample}{Example}
\newtheorem{myDef}{Definition}
\newtheorem{myTheorem}{Theorem}
\newtheorem{myLemma}{Lemma}
\newtheorem{myProblem}{Problem}

\begin{document}
\title{Common-Neighbor-Count-Based Representative Possible World Finding on Uncertain Graphs}

\author[Gu et al.]{%
  Chengjie Gu$^{1}$, Xiaoliang Xu$^{1}$, Yuxiang Wang$^{1}$, Kai Yao$^{2}$\\
  Mengzhao Wang$^{1}$, Tianxing Wu$^{3}$, Yingjie Xia$^{1}$, Xiangyu Ke$^{4}$%
}

\affiliation[obeypunctuation=true]{%
  \institution{$^1$Hangzhou Dianzi University, China; $^2$ Sichuan normal university, China;\\
  $^3$ Southeast University, China; $^4$ Zhejiang University, }%
  \country{China}%
}

\email{{chengjiegu,xxl,lsswyx,wmzssy}@hdu.edu.cn, kaiyao@sicnu.edu.cn, tianxingwu@seu.edu.cn,{xiayingjie,xiangyu.ke}@zju.edu.cn}

\begin{abstract}
A representative possible world (RPW) is a deterministic graph derived from an uncertain graph $\mathcal{G}$ where a designated structural feature closely approximates its expected value in $\mathcal{G}$. Serving as a proxy for $\mathcal{G}$, the RPW allows conventional deterministic algorithms to be directly executed on it for mining tasks targeting this feature, thereby avoiding computationally expensive enumeration or sampling on $\mathcal{G}$. Existing studies on RPWs primarily focus on individual node features, e.g., degree or triangle degree. However, many mining tasks, such as link prediction, critically rely on the number of common neighbors between two nodes, which is a pairwise feature. To bridge this gap, we study the \underline{C}ommon-neighbor-count-based \underline{R}epresentative \underline{P}ossible \underline{W}orld (CRPW) problem, extending RPWs from preserving node-level statistics to preserving pairwise structural relationships. The problem seeks the possible world that best preserves the expected numbers of common neighbors between node pair, and we prove that is NP-hard. To address it, we develop a two-stage basic algorithm that quickly initializes a possible world and then refines it iteratively. We next accelerate the refinement by replacing its costly floating-point evaluation with an efficient integer counting strategy, as the refinement only requires determining whether a change is beneficial, rather than computing its exact magnitude. Moreover, we design a Beta-based adaptive termination method to automatically stop the refinement once the desired quality of the possible world is reached, preventing over- or under-execution. Extensive experiments on real-world uncertain graphs demonstrate the effectiveness of our algorithms on diverse mining tasks. Especially on common-neighbor-related tasks, we achieve the best performance among all compared methods.
\end{abstract}

\maketitle
\vldbtopmatter

\section{Introduction}
\label{sec:introduction}

\begin{figure}[t]
  \setlength{\abovecaptionskip}{0.1cm}
    \centering
    \includegraphics[width=0.95\linewidth]{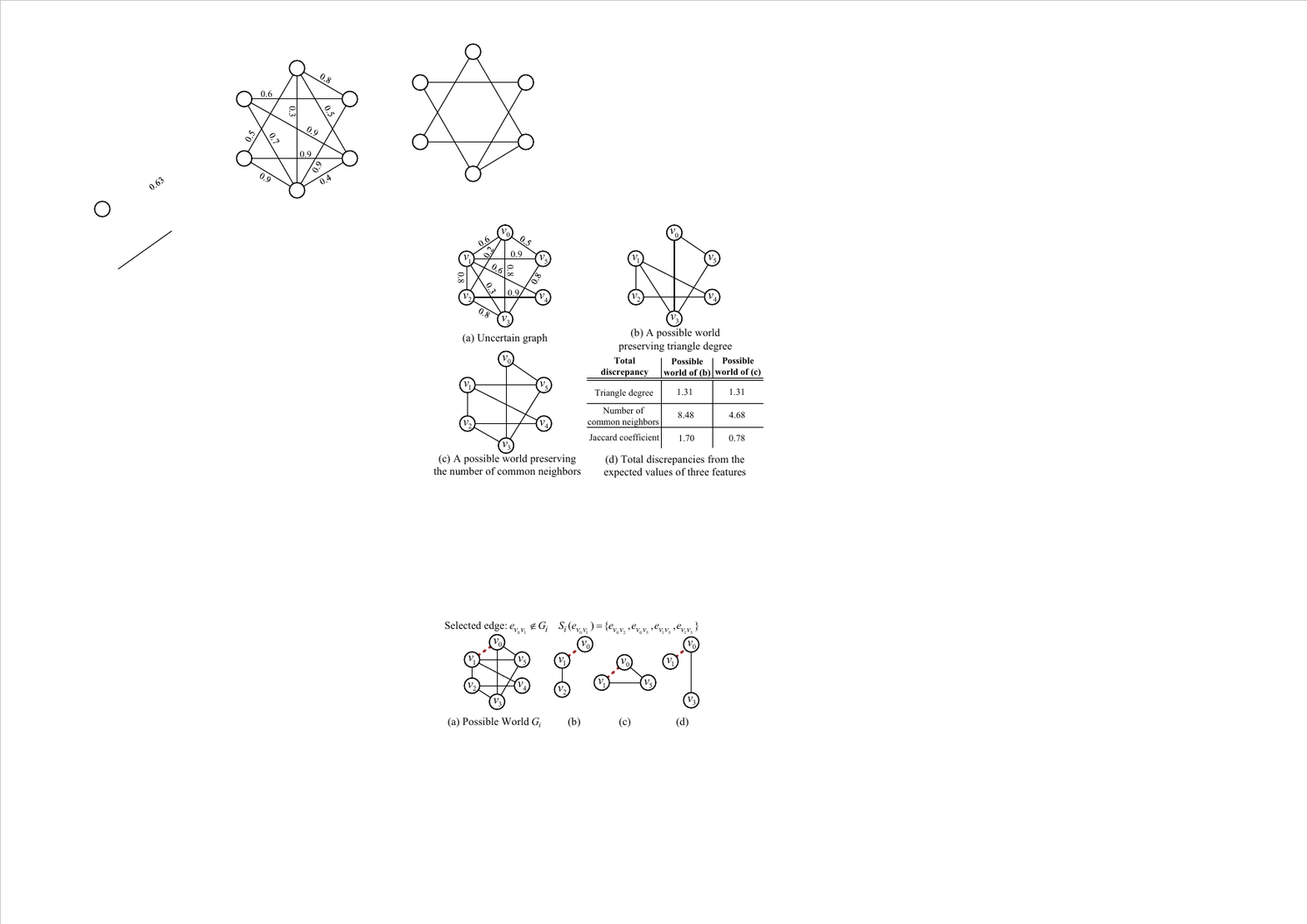}
    \caption{(a) An uncertain graph, (b-c) two of its possible worlds preserving the triangle degree and the number of common neighbors respectively, and (d) their total discrepancies from the expected values of three structural features.}
    \label{fig:intro_example}
    \vspace{-0.4cm}
\end{figure}

Graphs are widely used across scientific and industrial domains, where nodes represent objects and edges reflect the connections among them. In many real-world scenarios, such connections cannot be observed with certainty, but are instead estimated or predicted, and are thus inherently associated with an existence probability. In protein-protein interaction networks, an interaction between two proteins is established through noisy biological experiments and therefore holds only with a confidence derived from experimental evidence \cite{eronen2012biomine, asthana2004predicting}. In social networks, the influence or trust between two users is estimated from their interaction history, yielding a probabilistic connection \cite{klimt2004enron, KDD:Bonchi2014}. In knowledge graphs, relations are extracted by automated tools that often with errors, so each extracted edge carries a confidence score \cite{pujara2013knowledge, chen2019embedding}. To faithfully capture such uncertainty, \textit{uncertain graphs} \cite{parchas2014pursuit} are employed, in which each edge is annotated with an existence probability varies in $[0,1]$. Figure~\ref{fig:intro_example}(a) illustrates an uncertain graph with six nodes $\{v_0, \ldots, v_5\}$, where the number attached to each edge is its existence probability, e.g., the edge $e_{v_0 v_1}$ exists with probability of $0.6$. For clarity, edges with zero existence probability are often omitted.

To interpret the semantics of uncertain graphs, the \textit{possible world model} is widely adopted \cite{dai2021core, yang2019index, bonchi2014core, qiao2021maximal, zou2017truss}. Under this model, each edge with existence probability $p$ is independently observed to be either present with probability $p$ or absent with probability $1-p$. After observing all edges, we then yields a \textit{possible world}, i.e., a deterministic graph that retains exactly the edges observed to be present. Owing to the independence of the observations, such a possible world occurs with a probability equal to the product of existence probabilities of its present edges and non-existence probabilities of its absent ones. For example, Figure~\ref{fig:intro_example}(b) shows a possible world of the uncertain graph in Figure~\ref{fig:intro_example}(a), consisting of seven edges observed to be present while the other four edges are absent. The occurrence probability of such a possible world multiplies the existence probabilities $\{0.8,0.5,0.8,0.3,0.6,0.9,0.8\}$ of all present edges $\{e_{v_0v_3},e_{v_0v_5},e_{v_1v_2},e_{v_1v_3},e_{v_1v_4},e_{v_2v_4},e_{v_3v_5}\}$ and the non-existence probabilities $\{1-0.6,1-0.2,1-0.9,1-0.8\}$ of all absent edges $\{e_{v_0v_1},e_{v_0v_2},e_{v_1v_5},e_{v_2v_3}\}$, yielding the result of $2.65 \times 10^{-4}$.

\vspace{0.1cm}
Under the possible world model, an uncertain graph amounts to a probability distribution over its possible worlds. Because graph topology varies across possible worlds, structural features and mining task results also vary \cite{kaveh2021defining, saha2021shortest, peng2018efficient}. We therefore use expectations over this distribution to characterize them. Simple feature expectations are often directly computable from edge probabilities; e.g., a node's expected degree equals the sum of its incident-edge probabilities. However, mining tasks, such as the expected shortest-path distance between two nodes, depend on the graph topology and generally cannot be derived directly from individual edge probabilities. Computing such results exactly requires executing the task on every possible world and probabilistically aggregating the outputs, which is prohibitive because the number of possible worlds grows exponentially with the number of edges \cite{jin2011discovering, ball1986computational}. To this end, sampling-based methods execute the mining task on sampled possible worlds and use their outputs to estimate the target result \cite{jin2011distance,potamias2010knn}. However, achieving sufficient estimation accuracy often requires a large sample size, incurring substantial overhead from possible-world generation and repeated task execution \cite{caflisch1998monte}. Therefore, a more promising approach is to construct a single \textit{representative possible world} (RPW) tailored to a designated structural feature \cite{parchas2014pursuit}. An RPW closely approximates the expected value of this feature in the original uncertain graph while providing a complete deterministic topology. Consequently, for mining tasks driven by this feature, a conventional deterministic algorithm need only be executed once on the RPW, yielding a faithful approximation of the corresponding expected result over the uncertain graph \cite{parchas2014pursuit, song2016triangle}.

\vspace{0.1cm}
\noindent{\textbf{Prior studies and motivation}.} The representative possible world was first studied in \cite{parchas2014pursuit}, which constructs one that preserves the degree of each node. This degree-preserving representative possible world has proven effective for mining tasks that relevant to node degrees, such as computing shortest-path distance and betweenness centrality \cite{parchas2014pursuit}. However, as Song et al. \cite{song2016triangle} point out, the degree of a node only describes how many neighbors it has, whereas triangles, a basic motif shaping the structure of real-world graphs, are left out. They therefore construct the triangle-based representative possible world that additionally preserves the triangle degree of each node, i.e., the number of triangles containing the node. Figure~\ref{fig:intro_example}(b) is such a triangle-based representative possible world obtained from Figure~\ref{fig:intro_example}(a). We use node $v_5$ as an example. Each triangle incident to $v_5$ in Figure~\ref{fig:intro_example}(a) appears with probability equal to the product of the existence probabilities of its three edges. Summing this probability over all such triangles gives the expected triangle degree of $v_5$. This sum evaluates to $p(e_{v_5 v_0})\,p(e_{v_5 v_3})\,p(e_{v_0 v_3}) + p(e_{v_5 v_0})\,p(e_{v_5 v_1})\,p(e_{v_0 v_1}) + \dots = 0.5\times0.8\times0.8 + 0.5\times0.6\times0.9 +\dots= 0.81$ in the uncertain graph, while $v_5$ has a triangle degree of $1$ in Figure~\ref{fig:intro_example}(b), staying close to its expected value. Preserving triangle degree in representative possible world shows potential for many common neighbor-relevant mining tasks, including: link prediction where candidate node pairs are ranked by their numbers of common neighbors \cite{yao2016link, daminelli2015common, lu2009similarity}; Jaccard coefficient computation of two nodes that is directly derived from their common neighbors \cite{kogge2016jaccard}; overlapping community detection, which groups nodes sharing more common neighbors into the same community \cite{xie2013overlapping, khawaja2025common, asmi2022efficient}. Despite this, a triangle degree-preserving representative possible world is prone to inaccuracy in some cases, as it inherently ignores common neighbors between non-adjacent nodes, which lie beyond the expressive scope of triangle degree that only captures common neighbors of adjacent nodes. We illustrate this limitation with the following example.

\begin{myExample}
We use node $v_2$ in Figure~\ref{fig:intro_example}(b) to illustrate the limitation of triangle degree. $v_2$ belongs to the triangle $(v_1,v_2,v_4)$ and thus has a triangle degree of $1$. This triangle captures the common-neighbor relationships between $v_2$ and its adjacent nodes $v_1$ and $v_4$. However, the triangle degree of $v_2$ can not account for the common neighbors it shares with $v_0$. In Figure~\ref{fig:intro_example}(a), the expected common-neighbor count between $v_0$ and $v_2$ is $p(e_{v_0v_1})p(e_{v_1v_2})+p(e_{v_0v_3})p(e_{v_2v_3})=0.6\times0.8+0.8\times0.8=1.12$, whereas their actual count in Figure~\ref{fig:intro_example}(b) is zero. This is because $e_{v_0v_2}$ is absent, any common neighbor shared by $v_0$ and $v_2$ cannot form a triangle with them, leaving this discrepancy unreflected in any node's triangle degree. Hence, preserving individual nodes' triangle degrees does not ensure accurate preservation of common-neighbor counts, particularly for non-adjacent node pairs.
\end{myExample}

\vspace{-0.2cm}
Motivated by the aforementioned discussion, in this paper, we study the \textit{\underline{C}ommon-neighbor-count-based \underline{R}epresentative \underline{P}ossible \underline{W}orld} (CRPW), which directly preserves the expected number of common neighbors between two nodes in an uncertain graph $\mathcal{G}$. As shown in Figure \ref{fig:intro_example}(a), the expected number of common neighbors between $v_0$ and $v_2$ is $1.12$. Figure~\ref{fig:intro_example}(c) illustrates the CRPW that we aim for, it maintains this count at 1, yielding a minor discrepancy of $1.12-1=0.12$. In contrast, the triangle-degree-preserving possible world in Figure \ref{fig:intro_example}(b) yields zero common neighbors, resulting in a larger discrepancy of $1.12$. When aggregated over all node pairs, the total discrepancy of the number of common neighbors for Figure~\ref{fig:intro_example}(c) is merely $4.68$, lower than the $8.48$ of the possible world preserving the triangle degree in Figure~\ref{fig:intro_example}(b). This advantage extends to other common-neighbor-driven metrics, such as the Jaccard coefficient. As shown in Figure~\ref{fig:intro_example}(d), the CRPW from Figure \ref{fig:intro_example}(c) achieves smaller discrepancies in both common neighbor count and Jaccard coefficient (0.78 vs. 1.7), while preserving the identical triangle degree as the possible world in Figure~\ref{fig:intro_example}(b).

\vspace{0.1cm}
\noindent\textbf{Challenges and our solutions.} we face two non-trivial challenges for solving the CRPW problem.

\noindent\underline{Challenge I}: \textit{How to efficiently construct a CRPW?}
A brute-force approach to construct the CRPW would enumerate all possible worlds and select the one that best preserves the expected number of common neighbors. However, given an uncertain graph with the edge set $\mathcal{E}$ where each edge is independently present or absent, the search space grows exponentially to $2^{|\mathcal{E}|}$ possible worlds, rendering exhaustive enumeration computationally infeasible for large uncertain graphs. We further prove that the CRPW problem is NP-hard (\S \ref{sec:problem_complexity_analysis}), precluding the existence of an exact polynomial-time algorithm. Consequently, efficiently constructing a high-quality CRPW poses a significant algorithmic challenge.

\noindent\underline{Solution I.}
To address this challenge, we resort to a heuristic that constructs a CRPW through iterative local refinement instead of entirely enumeration. The effectiveness of such a local refinement paradigm has been validated in prior studies on representative possible worlds that preserve degree and triangle degree \cite{parchas2014pursuit, song2016triangle}. We present a two-stage basic algorithm in \S \ref{sec:basic_algorithm}. It first rapidly initializes a possible world and then refines it iteratively. In each iteration, it randomly selects an edge and change its existence status, provided that this modification reduces the discrepancy between the common-neighbor counts in the current possible world and their expected values in the original uncertain graph. The key to this refinement lies in efficiently evaluating the impact of an edge state change. We precisely characterize the edges whose endpoints' common neighbors are affected by such an edge state change, and record them in an \textit{edge influence set}. Each impact evaluation is thus localized to this set rather than the entire graph, thereby accelerating the refinement process. We further optimize the refinement in \S \ref{sec:integer_based_refinement}. Specifically, we observe that the decision to change an edge state does not require computing its exact impact; it only requires determining whether the change is beneficial. Building on this insight, we replace the costly floating-point evaluation of the exact impact with an efficient integer counting strategy, further boosting the overall efficiency.

\vspace{0.1cm}
\noindent\underline{Challenge II}: \textit{How can the desired quality of the possible world be assessed during refinement to achieve  adaptive termination?}
Algorithms based on the local refinement paradigm that tailored for constructing representative possible world, including our proposed approach, control the refinement process via a user-specified iteration count. However, determining an appropriate iteration budget is inherently challenging. The exact number of iterations required to achieve a good representative possible world is generally unknown and varies significantly across different uncertain graphs. Consequently, an explicit iteration count is fundamentally inflexible: an insufficient budget leads to premature termination and suboptimal representative graph quality, while an excessive budget incurs unnecessary computational overhead once the target quality is already met. This calls for the ability to perceive the quality of the possible world during refinement, enabling the algorithm to adaptively terminate once the desired quality is achieved.

\noindent\underline{Solution II.}
To address this challenge, we present a Beta-based adaptive termination method in \S\ref{sec:beta_based_termination}.As the refinement proceeds, fewer edges remain whose state changes can further reduce the discrepancy between the common-neighbor counts in the possible world and their expected values. We refer to such edges as \textit{state-changeable edges} and use their proportion among all edges, termed the \textit{state-changeable edge proportion} (SEP), to measure the refinement quality: a smaller SEP indicates a better-refined possible world. Since traversing all edges to compute the SEP exactly is infeasible, we estimate it through edge sampling and model the uncertainty using a Beta distribution. Under this distribution, the probability that the true SEP is below a user-specified quality threshold measures the confidence that the desired quality has been reached. Refinement terminates automatically once this confidence reaches a user-specified threshold, avoiding both insufficient and excessive iterations.

\vspace{0.1cm}
\noindent\textbf{Contribution.} The main contributions are summarized as follows.
\begin{itemize}
\item We propose the CRPW, a representative possible world that best preserves the expected numbers of common neighbors (\S \ref{sec:problem_definition}), and prove that finding a CRPW is NP-hard (\S \ref{sec:problem_complexity_analysis}).
\item We present a two-stage basic algorithm that initializes a possible world and iteratively refine it (\S\ref{sec:basic_algorithm}), along with an optimized algorithm that replaces costly floating-point evaluations with efficient integer counting (\S\ref{sec:integer_based_refinement}).
\item We design a Beta-based adaptive termination method that assesses refinement quality through edge sampling and terminates automatically once the user-desired quality is attained with sufficient confidence (\S\ref{sec:beta_based_termination}).
\item We conduct extensive experiments on four real-world uncertain graphs to evaluate the effectiveness (\S \ref{sec:exp_effectiveness}), the efficiency (\S \ref{sec:exp_efficiency}), the Beta-based adaptive termination (\S \ref{sec:exp_beta}), and the parameter sensitivity (\S \ref{sec:exp_sensitivity}) of our algorithms.
\end{itemize}

\section{Preliminaries}
\label{sec:pre}

\subsection{Problem Definition}
\label{sec:problem_definition}

\begin{myDef}[\textbf{Uncertain Graph} \cite{parchas2014pursuit}]
	\label{def:uncertain_graph}
	An uncertain graph $ \mathcal{G} = \left(\mathcal{V}, \mathcal{E}, \mathcal{P}\right) $ involves a node set $\mathcal{V}$, an edge set $\mathcal{E}$, and a function $\mathcal{P}:\mathcal{E}\to(0,1]$ that assigns an existence probability $p(e_{uv})\in (0,1]$ to each undirected edge $e_{uv} \in \mathcal{E}$ between two nodes $u$ and $v$ in $\mathcal{V}$. For simplicity, we omit the subscript of an edge when necessary.
\end{myDef}

Under the context of uncertain graph, observing any probabilistic edge $e\in\mathcal{E}$ yields two outcomes: the edge is present with probability $p(e)$ or absent with probability $1-p(e)$. As a result, the topology of uncertain graph is nondeterministic and can change across different observations. To model this uncertainty and faithfully express the semantics of uncertain graphs, researchers commonly adopt the \textit{possible world model} \cite{dai2021core,yang2019index,bonchi2014core,peng2018efficient,qiao2021maximal,zou2017truss}, with the definition below.

\begin{myDef}[\textbf{Possible World} \cite{parchas2014pursuit}]
  \label{def:possible_world}
	Given an uncertain graph $\mathcal{G} = (\mathcal{V}, \mathcal{E}, \mathcal{P})$, a possible world $G = (V, E)$ of $\mathcal{G}$ is a deterministic graph derived from $\mathcal{G}$, where $V = \mathcal{V}$ and $E \subseteq \mathcal{E}$. 
\end{myDef}

With the possible world model, the independence of edge existence is a widely adopted assumption \cite{jin2011discovering,jin2011distance,potamias2010k}, facilitating the calculation of the existing probability of a specific possible world $G$. We denote the event that the uncertain graph $\mathcal{G}$ is observed as one possible world $G$ by $G \sqsubseteq \mathcal{G}$. The probability of the event $G \sqsubseteq \mathcal{G}$ is then calculated by Eq. \ref{eq:possible_world}, denoted by $P\{G\sqsubseteq \mathcal{G}\}$ (shorten as $P\{G\}$ when no ambiguity arises).

\begin{equation}
		\label{eq:possible_world}
		P\{G \sqsubseteq \mathcal{G}\} = \prod_{e \in E} p(e) \prod_{e\in \mathcal{E} \setminus E} (1 - p(e))
\end{equation}

Under the possible world model, an uncertain graph induces a distribution over all possible worlds, and a structural feature, such as the degree of a node, takes different values across different possible worlds \cite{kaveh2021defining, saha2021shortest, peng2018efficient}. To enable graph mining on uncertain graphs, existing works construct a \textbf{representative possible world}, i.e., a deterministic graph from the possible worlds that best preserves a task-relevant structural feature in expectation, thereby transforming mining on uncertain graphs to a deterministic one. Differ from prior works, in this paper, we focus on identifying the representative possible world that preserves the numbers of common-neighbors, to better support mining task relied on that feature.

\begin{myDef}[\textbf{Common Neighbors}]
	\label{def:cn}
	Given a possible world $G = (V, E)$ and two nodes $u, v \in V$, the common neighbors between $u$ and $v$ in $G$, denoted as $C_G(u, v)$, is defined as Eq. \ref{eq:cn}, where $N_G(u)$ ($N_G(v)$) is the set of neighbors of node $u$ ($v$) in $G$.
	\begin{equation}
		\label{eq:cn}
		C_G(u, v) = N_G(u) \cap N_G(v)
	\end{equation}
\end{myDef}

For nodes $u$ and $v$ in a possible world $G$, we denote the number of their common neighbors by $|C_G(u, v)|$. In the uncertain graph, this quantity becomes a random variable over possible worlds. We therefore consider its expectation, defined as follows.

\begin{myDef}[\textbf{Expected Number of Common Neighbors}]
	\label{def:exp_cn}
  Given an uncertain graph $\mathcal{G} = (\mathcal{V}, \mathcal{E}, \mathcal{P})$ and two nodes $u,v \in \mathcal{V}$, their expected number of common neighbors $\overline{C}_{\mathcal{G}}(u, v)$ is shown in Eq. \ref{eq:exp_cn_computation}, where $C_{\mathcal{G}}(u,v)$ denotes the set of common neighbors of $u$ and $v$ within the topological structure of $\mathcal{G}$, ignoring edge probabilities. 
  \begin{equation}
    \label{eq:exp_cn_computation}
		\overline{C}_{\mathcal{G}}(u, v) = \sum_{w \in C_{\mathcal{G}}(u, v)}p(e_{uw})\cdot p(e_{vw})
	\end{equation}
\end{myDef}

Intuitively, a possible world $G$ is a representative instance of an uncertain graph $\mathcal{G}$ regarding common neighbors if the count for any two nodes $u$ and $v$ in $G$ closely approximates its expected value in $\mathcal{G}$, that is, the discrepancy between them is small. Since nodes $u$ and $v$ uniquely determine the edge $e_{uv}$, we term this discrepancy in their common neighbors as the \textit{edge discrepancy} for convenience.

\begin{myDef}[\textbf{Edge Discrepancy}]
	\label{def:edge_dis}
	Given an uncertain graph $\mathcal{G} = (\mathcal{V}, \mathcal{E}, \mathcal{P})$, a possible world $G \sqsubseteq \mathcal{G}$, and an edge $e_{uv} \in \mathcal{E}$, the edge discrepancy of $e_{uv}$ is defined as the difference between the number of common neighbors of $u$ and $v$ in $G$ and their expected number of common neighbors in $\mathcal{G}$, formally expressed as below.
	\begin{equation}
		\label{eq:edge_dis}
		dis_{G \sqsubseteq \mathcal{G}} (e_{uv}) = |C_{G}(u,v)| - \overline{C}_{\mathcal{G}}(u, v)
	\end{equation}
\end{myDef}

For brevity, we denote $dis_{G \sqsubseteq \mathcal{G}} (e_{uv})$ by $dis_{G} (e_{uv})$ when the context is clear. Based on Definition \ref{def:edge_dis}, we further define the total discrepancy of a possible world $G$ w.r.t. an uncertain graph $\mathcal{G}$.

\begin{myDef}[\textbf{Total Discrepancy}]
	\label{def:tcn_dis}
	Given an uncertain graph $\mathcal{G} = (\mathcal{V}, \mathcal{E}, \mathcal{P})$, a possible world $G \sqsubseteq \mathcal{G}$, the total discrepancy of $G$, denoted as $dis_{\mathcal{G}}(G)$, is defined as the sum of the absolute values of the edge discrepancy over all edges in $\mathcal{G}$, given by Eq. \ref{eq:tcn_dis}.
  \begin{equation}
		\label{eq:tcn_dis}
		dis_{\mathcal{G}}(G) = \sum_{e \in \mathcal{E}} |dis_{G \sqsubseteq \mathcal{G}}(e)|
	\end{equation}
\end{myDef}

Total discrepancy provides a quantitative measure of how well a possible world $G$ preserves the expected number of common neighbors in the uncertain graph $\mathcal{G}$. Accordingly, we seek a \textit{\underline{C}ommon-neighbor-count-based \underline{R}epresentative \underline{P}ossible \underline{W}orld} (CRPW) that minimizes this total discrepancy. A smaller discrepancy indicates a higher-quality representative possible world. Formally, the CRPW problem is defined as follows.

\begin{myProblem}[\textbf{CRPW Problem}]
	\label{prob:CNRPW}
	Given an uncertain graph $\mathcal{G}=(\mathcal{V}, \mathcal{E}, \mathcal{P})$, the CRPW Problem aims to find a possible world $G\sqsubseteq  \mathcal{G}$ such that it has the smallest total discrepancy w.r.t. $\mathcal{G}$, given by Eq. \ref{eq:CNRPW}.
    \begin{equation}
		\label{eq:CNRPW}
	G = \mathop{\arg\min}\limits_{G \sqsubseteq  \mathcal{G}}dis_{\mathcal{G}}(G)
  \end{equation}
\end{myProblem}

\subsection{Analysis of Problem Complexity}
\label{sec:problem_complexity_analysis}

We establish that the CRPW Problem is NP-hard. Since NP-hardness is defined over decision problems, we investigate the decision version of the CRPW Problem, which is formulated below.

\begin{myProblem}[\textbf{CRPW Decision Problem}]
  \label{prob:CRPW_DEC}
  Given an uncertain graph $\mathcal{G^*}=(\mathcal{V^*}, \mathcal{E^*}, \mathcal{P^*})$ and a threshold $T$, the CRPW Decision Problem determines whether there exists a possible world $G^* \sqsubseteq \mathcal{G^*}$ such that the total discrepancy $dis_{\mathcal{G^*}}(G^*) \le T$.
\end{myProblem}

Generally, the decision version of a problem is no harder than its optimization version \cite{kleinberg2006algorithm}: any polynomial-time algorithm for the CRPW Problem also solves the CRPW Decision Problem, by computing the total discrepancy of the returned possible world and comparing it against $T$. Thus, if the CRPW Decision Problem is NP-hard, then the CRPW Problem is NP-hard as well. Therefore, we focus on proving the NP-hardness of the CRPW Decision Problem.

\begin{myTheorem}
  \label{th:np_hard}
  The CRPW Decision Problem is NP-hard.
\end{myTheorem}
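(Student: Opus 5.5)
We plan to prove Theorem~\ref{th:np_hard} by a polynomial-time reduction from a \emph{strongly} NP-hard number problem, namely 3-Partition: given $3m$ positive integers $a_1,\dots,a_{3m}$ with $\sum_i a_i = mB$ and $B/4<a_i<B/2$, decide whether they can be split into $m$ triples each summing to $B$. Strong NP-hardness matters. The only weights that the CRPW objective can express are \emph{counts} of common neighbors, so each integer $a_i$ must be encoded in unary by $a_i$ nodes. A reduction from Partition or Subset Sum would therefore have exponential size, while for 3-Partition the instance $\mathcal{G^*}$ stays polynomial in the input length.

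\textbf{Construction.} We introduce $m$ ``bin'' nodes $b_1,\dots,b_m$ and a common anchor node $t$, joined by edges $e_{b_k t}$. For every item $i$ we introduce $a_i$ item nodes $x_{i,1},\dots,x_{i,a_i}$. Each item node gets an edge to $t$ with probability $1$, and an edge to every bin $b_k$ with a small probability chosen so that $\overline{C}_{\mathcal{G^*}}(b_k,t)$ equals exactly $B$. The intended reading is that $x_{i,j}$ is assigned to $b_k$ iff $e_{x_{i,j}b_k}$ is present. In that case $|C_{G^*}(b_k,t)|$ counts the unary weight placed in bin $k$, and the edge discrepancies $|dis(e_{b_kt})|$ all vanish iff every bin receives weight exactly $B$.

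\textbf{Enforcing consistency.} Two further constraints are needed: each item node must go to exactly one bin, and all $a_i$ copies of item $i$ must go to the same bin. We would enforce both with auxiliary gadgets whose pairs have expected common-neighbor counts chosen at half-integers (e.g.\ via probabilities $1/2$ and $1$). Every such pair then contributes at least $1/2$ to the total discrepancy in \emph{any} possible world, and contributes exactly $1/2$ only under the consistent choice. For the copies of item $i$ we would chain them: consecutive copies $x_{i,j},x_{i,j+1}$ get an edge, and their common-neighbor target among the bins is set so that a mismatch of bins increases the discrepancy. For the ``exactly one bin'' condition we would use per-node gadgets of the same kind. The threshold is then
\[
T \;=\; \tfrac12\cdot(\text{number of gadget pairs}) \;+\; \text{(fixed unavoidable contributions of the remaining edges)}.
\]

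\textbf{The two directions.} Given a valid 3-partition, we build $G^*$ from the corresponding assignment and verify that $dis_{\mathcal{G^*}}(G^*)=T$. Conversely, suppose $dis_{\mathcal{G^*}}(G^*)\le T$. Since every gadget term is individually bounded below by its half-integer gap, each gadget must sit at its minimum, which forces a consistent assignment. The bin edges must then have zero discrepancy, which yields a 3-partition. Polynomiality follows because the graph has $O(m\cdot mB)$ nodes and edges with $B$ written in unary.

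\textbf{Main obstacle.} We expect the hardest step to be designing the consistency gadgets. Adding one edge changes the common neighbors of \emph{every} edge incident to its endpoints, so the gadgets interfere with each other and with the bin edges $e_{b_kt}$. The accounting must show that every cross-effect is either constant over consistent worlds or strictly penalized. Our first attempt would be to give each gadget its own private helper nodes attached with probability $1$, so that the only variable edges are the item--bin edges. We would then tabulate, for a single item node, how each of its $2^m$ bin-choice patterns affects every edge it touches, and tune the probabilities until the unique minimizers are exactly the one-bin choices.
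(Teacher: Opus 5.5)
\paragraph{There is a genuine gap.}
The consistency gadgets carry all of the combinatorial content of 3-Partition, and you never construct them. The instantiations you sketch also fail.

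Without the gadgets your instance is trivial. Take $p(e_{b_kt})=1$ and $p(e_{xb_k})=\frac1m$. Then:
\begin{itemize}
\item $e_{xt}$ has target $\sum_k \frac1m = 1$;
\item each $e_{xb_k}$ has target and count $1$ (its common neighbor is $t$);
\item $e_{b_kt}$ has target $B$.
\end{itemize}
So the total discrepancy is $\sum_x |\#\mathrm{bins}(x)-1|+\sum_k|\mathrm{load}_k-B|$. This is $0$ for any balanced assignment of the $mB$ unit nodes, so everything hinges on forcing the $a_i$ copies of item $i$ into one bin.

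Here the CRPW objective works against you. Every target is an expectation, so the objective favours worlds that resemble the average world, and the average world spreads item $i$ as $a_i/m$ copies per bin. Concretely, take your chain edge $e_{x_{i,j}x_{i,j+1}}$:
\begin{itemize}
\item its common neighbors in $\mathcal{G}^*$ are $t,b_1,\dots,b_m$, so its target is $1+m\cdot\frac{1}{m^2}=1+\frac1m$;
\item its count is $1+[\text{same bin}]$;
\item it therefore costs $1-\frac1m$ when the copies agree and only $\frac1m$ when they disagree.
\end{itemize}
So it rewards disagreement for $m\ge 3$ and is indifferent for $m=2$. Any pair that instead aggregates the copies of item $i$ against bin $k$ has target about $a_i/m$, and is likewise minimized by spreading.

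Your half-integer device does not repair this, for two reasons:
\begin{itemize}
\item A target $c+\frac12$ is met with equality by both $c$ and $c+1$. It cannot separate agreement from disagreement when they differ by one common neighbor.
\item The targets are not free parameters. They are induced by the edge probabilities, and the item--bin probabilities are pinned (uniformly at $\frac1m$ in your design) by $\overline{C}_{\mathcal{G}^*}(b_k,t)=B$.
\end{itemize}
Consequently $T$ is never determined. The converse-direction claim that every term sits at its individual lower bound is also unsupported.

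\paragraph{How the paper avoids this.}
The paper reduces from Monotone 1-in-3 3SAT instead. That problem is unweighted, and its intended local configuration coincides with the expectation. No unary encoding, strong NP-hardness, or copy consistency is needed.

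The construction has a center $c$, variable nodes $b_i$ with $p(e_{cb_i})=\frac13$, and clause nodes $a_j$ joined to $c$ and to their three variables by probability-$1$ edges. Write $s_j=\sum_{i\in A_j}x_i$.
\begin{itemize}
\item The clause edge $e_{ca_j}$ has target $3\cdot\frac13=1$, which is exactly the 1-in-3 condition. It contributes $|s_j-1|$.
\item Each connection edge contributes $|x_i-\frac13|=\frac13+\frac{x_i}{3}$, which is affine in $x_i$.
\item Each variable edge always has discrepancy $0$.
\end{itemize}
The per-clause cost $|s_j-1|+1+\frac{s_j}{3}$ takes the values $2,\frac43,\frac83,4$ for $s_j=0,1,2,3$. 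Hence $T=\frac{4z}{3}$ is attained if and only if every clause has exactly one true variable.

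If you want to keep a bin-packing flavour, you would need gadgets whose expected counts equal the consistent configuration, as the paper's clause edge does. Otherwise, switching to an unweighted source problem is the natural fix.
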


\begin{proof}
We prove the NP-hardness of the CRPW Decision Problem by a reduction from the Monotone 1-in-3 3SAT problem, a well-known NP-complete problem~\cite{schaefer1978complexity}. An instance of Monotone 1-in-3 3SAT consists of $y$ Boolean variables $B_1, \dots, B_y$ and $z$ clauses $A_1, \dots, A_z$, where each clause contains exactly three positive variables (no negations). The Monotone 1-in-3 3SAT instance is satisfiable if there exists a truth assignment such that every clause has exactly one true variable. Our reduction maps a Monotone 1-in-3 3SAT instance to an uncertain graph $\mathcal{G}^*$ with a threshold $T$, such that the Monotone 1-in-3 3SAT instance is satisfiable if and only if $(\mathcal{G}^*, T)$ is a Yes-instance of the CRPW Decision Problem. The two sides are stated in different languages: one is Boolean satisfiability, the other is the minimum total discrepancy over possible worlds. To establish this equivalence, we proceed in the following four steps.

\vspace{0.1cm}
\noindent\textbf{Step 1. Construction of the uncertain graph $\mathcal{G}^*$.} Given a Monotone 1-in-3 3SAT instance with boolean variables $B_1, \dots, B_y$ and clauses $A_1, \dots, A_z$, we construct the uncertain graph $\mathcal{G}^* = (\mathcal{V}^*, \mathcal{E}^*, \mathcal{P}^*)$ as follows. (1) For node set $\mathcal{V}^*$, it consists of: a center node $c$; a boolean variable node $b_i$ for each boolean variable $B_i$; and a clause node $a_j$ for each clause $A_j$. (2) For edge set $\mathcal{E}^*$, it consists of three types of edges. A variable edge $e_{cb_i}$ connects the center node $c$ and the boolean variable node $b_i$. A clause edge $e_{ca_j}$ connects the center node $c$ and the clause node $a_j$. A connection edge $e_{a_jb_i}$ connects the clause node $a_j$ and the boolean variable node $b_i$, and is present only when boolean variable $B_i$ appears in clause $A_j$. (3) For existence probabilities $\mathcal{P}^*$, we set $p(e_{cb_i}) = \frac{1}{3}$ for every variable edge and $p(e_{ca_j}) = p(e_{a_jb_i}) = 1$ for every clause edge and connection edge. The presence of a variable edge in a possible world encodes $B_i = \text{True}$, and its absence encodes $B_i = \text{False}$. We set its probability to $\frac{1}{3}$ so that the expected number of common neighbors between the endpoints of each clause edge $e_{ca_j}$ equals $3 \times \frac{1}{3} = 1$. This encodes the 1-in-3 condition of clause $A_j$ into the discrepancy of $e_{ca_j}$. The clause edges and connection edges are assigned probability $1$, so they are present in every possible world; therefore possible worlds differ only in which variable edges they retain. As a result, the constructed uncertain graph $\mathcal{G}^*$ has $1 + y + z$ nodes and $y + z + 3z = y + 4z$ edges, with all existence probabilities equal to $\frac{1}{3}$ or $1$. Hence, $\mathcal{G}^*$ is constructed in polynomial time w.r.t. the size of the Monotone 1-in-3 3SAT instance, and the reduction is valid.

\vspace{0.1cm}
\noindent\textbf{Step 2. Total discrepancy of $G^* \sqsubseteq \mathcal{G}^*$.} We next show the total discrepancy calculation of a possible world $G^* \sqsubseteq \mathcal{G}^*$ with an algebraic function. Since possible worlds differ only in which variable edges they retain, we encode this difference by an indicator $x_i \in \{0,1\}$ for each variable edge $e_{cb_i}$, where $x_i = 1$ if $e_{cb_i}$ is present in $G^*$ and $x_i = 0$ otherwise. Next, for each edge type, we compute the expected number of common neighbors of its endpoints in $\mathcal{G}^*$ by Definition~\ref{def:exp_cn}, then express the edge discrepancy by Definition~\ref{def:edge_dis}. (1) For clause edge $e_{ca_j}$. In the topology of $\mathcal{G}^*$, each boolean variable node $b_i$ with $B_i \in A_j$ is a common neighbor of $c$ and $a_j$. Thus $\overline{C}_{\mathcal{G}^*}(c, a_j) = \sum_{B_i \in A_j} p(e_{cb_i})\, p(e_{a_jb_i}) = 3 \times \frac{1}{3} \times 1 = 1$. In $G^*$, such a $b_i$ is a common neighbor only if its variable edge $e_{cb_i}$ is present, i.e., $x_i = 1$, so the number of common neighbor of $a_j$ and $b_i$ is $\sum_{i \in A_j} x_i$. Hence the edge discrepancy of $e_{ca_j}$ is $dis_{G^*}(e_{ca_j}) = \sum_{i \in A_j} x_i - 1 $. (2) For connection edge $e_{a_jb_i}$. In the topology of $\mathcal{G}^*$, the only common neighbor of $a_j$ and $b_i$ is the center node $c$. Thus $\overline{C}_{\mathcal{G}^*}(a_j, b_i) = p(e_{ca_j})\, p(e_{cb_i}) = \frac{1}{3}$. In $G^*$, center node $c$ is a common neighbor only if the variable edge $e_{cb_i}$ is present, i.e., $x_i = 1$, so the number of common neighbor of $a_j$ and $b_i$ is $x_i$. Hence the edge discrepancy of $e_{a_jb_i}$ is $dis_{G^*}(e_{a_jb_i}) =  x_i - \frac{1}{3} $. (3) For variable edge $e_{cb_i}$. In the topology of $\mathcal{G}^*$, each clause node $a_j$ whose clause $A_j$ contains $B_i$ is a common neighbor of $c$ and $b_i$. Thus $\overline{C}_{\mathcal{G}^*}(c, b_i) = \sum_{j: B_i \in A_j} p(e_{ca_j})\, p(e_{a_jb_i}) = \sum_{j: B_i \in A_j} 1$, which means the number of clauses containing $B_i$. In $G^*$, each such $a_j$ is connected to $c$ by $e_{ca_j}$ and to $b_i$ by $e_{a_jb_i}$, both of which are present, so every such $a_j$ remains a common neighbor. Therefore the number of common neighbor of $c$ and $b_i$ is the number of clauses containing $B_i$, i.e., $\sum_{j: B_i \in A_j} 1$. So the edge discrepancy of $e_{cb_i}$ is $dis_{G^*}(e_{cb_i}) = 0$. 

Summing the absolute edge discrepancies over all edges (Definition~\ref{def:tcn_dis}), we obtain the total discrepancy of $G^* \sqsubseteq \mathcal{G}^*$ as follows:
\begin{equation}
  \label{eq:np_total_two_terms}
  dis_{\mathcal{G}^*}(G^*) = \sum_{j=1}^z \Big| \sum_{i \in A_j} x_i - 1 \Big| + \sum_{j=1}^z \sum_{i \in A_j} \Big| x_i - \frac{1}{3} \Big| + \sum_{i=1}^y 0,
\end{equation}
where the three terms come from the clause edges, the connection edges, and the variable edges, respectively.

We next simplify the Eq.\ref{eq:np_total_two_terms}. Since $x_i \in \{0,1\}$, we have $|x_i - \frac{1}{3}| = \frac{1}{3} + \frac{1}{3} x_i$: it equals $\frac{1}{3}$ when $x_i = 0$ and $\frac{2}{3}$ when $x_i = 1$. As each clause $A_j$ contains exactly three boolean variables, summing this over the boolean variables in $A_j$ gives $\sum_{i \in A_j} (\frac{1}{3} + \frac{1}{3} x_i) = 1 + \frac{1}{3} \sum_{i \in A_j} x_i$.

Substituting this into Eq.~\ref{eq:np_total_two_terms} and grouping all terms by clause, the total discrepancy becomes a sum of per-clause costs,
\begin{equation}
  \label{eq:np_clause_cost}
  dis_{\mathcal{G}^*}(G^*) = \sum_{j=1}^z Cost(A_j), \quad
  Cost(A_j) = \Big| \sum_{i \in A_j} x_i - 1 \Big| + 1 + \frac{1}{3} \sum_{i \in A_j} x_i,
\end{equation}
where $Cost(A_j)$ is the cost contributed by clause $A_j$, and $\sum_{i \in A_j} x_i$ is the number of boolean variables in clause $A_j$ assigned true. Each $Cost(A_j)$ depends only on the boolean variables of clause $A_j$, so the clauses are decoupled and the total discrepancy can be minimized by minimizing each $Cost(A_j)$ independently.

\vspace{0.1cm}
\noindent\textbf{Step 3. Minimum total discrepancy.} We now determine the minimum total discrepancy over all possible worlds of $\mathcal{G}^*$. By Eq.~\ref{eq:np_clause_cost}, the total discrepancy is the sum of the per-clause costs $Cost(A_j)$, each minimized independently. We therefore first determine the minimum of a single $Cost(A_j)$. Since $Cost(A_j)$ depends only on the number of boolean variables in clause $A_j$ assigned true, i.e., $\sum_{i \in A_j} x_i \in \{0,1,2,3\}$, we enumerate the four cases: $Cost(A_j) = 2, \frac{4}{3}, \frac{8}{3}, 4$ when $\sum_{i \in A_j} x_i = 0, 1, 2, 3$, respectively. Hence $Cost(A_j)$ attains its minimum $\frac{4}{3}$ uniquely when $\sum_{i \in A_j} x_i = 1$, which obtained Eq. \ref{eq:np_clause_bound} with equality if and only if exactly one boolean variable in clause $A_j$ is assigned true.
\begin{equation}
  \label{eq:np_clause_bound}
  Cost(A_j) \ge \frac{4}{3}
\end{equation}

Summing Eq.~\ref{eq:np_clause_bound} over all $z$ clauses gives a lower bound on the total discrepancy of any possible world, shown in Eq. \ref{eq:np_global_bound}.
\begin{equation}
  \label{eq:np_global_bound}
  dis_{\mathcal{G}^*}(G^*) = \sum_{j=1}^z Cost(A_j) \ge \frac{4z}{3}
\end{equation}

\vspace{0.1cm}
\noindent\textbf{Step 4. Reduction correctness.} We set the threshold $T$ of the CRPW Decision Problem to $T = \frac{4z}{3}$, the minimum total discrepancy established in Eq.~\ref{eq:np_global_bound}. We show that $(\mathcal{G}^*, T)$ is a Yes-instance, i.e., there exists a possible world $G^*$ with $dis_{\mathcal{G}^*}(G^*) \le \frac{4z}{3}$, if and only if the Monotone 1-in-3 3SAT instance is satisfiable. Since $dis_{\mathcal{G}^*}(G^*) \ge \frac{4z}{3}$ holds for every possible world by Eq.~\ref{eq:np_global_bound}, the condition $dis_{\mathcal{G}^*}(G^*) \le \frac{4z}{3}$ is equivalent to $dis_{\mathcal{G}^*}(G^*) = \frac{4z}{3}$.

\vspace{0.05cm}
($\Rightarrow$) Suppose the Monotone 1-in-3 3SAT instance is satisfiable. Then there is a truth assignment under which every clause has exactly one true variable. We construct a possible world $G^*$ by setting $x_i = 1$ if $B_i$ is true and $x_i = 0$ otherwise. Under this assignment, each clause has exactly one true variable, so $Cost(A_j) = \frac{4}{3}$ for every clause by Eq.~\ref{eq:np_clause_bound}, and thus $dis_{\mathcal{G}^*}(G^*) = \frac{4z}{3}$. Hence $(\mathcal{G}^*, T)$ is a Yes-instance of the CRPW Decision Problem.

\vspace{0.05cm}
($\Leftarrow$) Suppose there is a possible world $G^*$ with $dis_{\mathcal{G}^*}(G^*) = \frac{4z}{3}$. Since $dis_{\mathcal{G}^*}(G^*) = \sum_{j=1}^z Cost(A_j)$ and each $Cost(A_j) \ge \frac{4}{3}$, the sum equals $\frac{4z}{3}$ only if $Cost(A_j) = \frac{4}{3}$ for every clause. This forces every clause to have exactly one true variable under the assignment defined by $x_i$, which thus satisfies the Monotone 1-in-3 3SAT instance.

\vspace{0.05cm}
Combining both directions, the constructed CRPW Decision Problem instance $(\mathcal{G}^*, T)$ is a Yes-instance if and only if the Monotone 1-in-3 3SAT instance is satisfiable. Since the construction takes polynomial time and Monotone 1-in-3 3SAT is NP-complete, the CRPW Decision Problem is NP-hard. As established earlier, the NP-hardness of the CRPW Decision Problem implies that of the CRPW Problem.
\end{proof}

The NP-hardness established above implies that finding an optimal possible world is intractable. We therefore resort to heuristics that efficiently produce a representative possible world in practice, as presented in the following sections.

\section{Basic Algorithm}
\label{sec:basic_algorithm}

To obtain a CRPW, a simple yet effective approach is to initialize a possible world from the uncertain graph  and iteratively refine its total discrepancy. Therefore, in this section, we propose a two-stage basic algorithm, consisting of (1) \textit{probability-based initialization} (in \ref{sec:probability_based_initial}) and (2) \textit{random selection-based refinement} (in \ref{sec:random_select_based_opt}).

\subsection{Probability-based Initialization}
\label{sec:probability_based_initial} 

A straightforward approach is to randomly select a subset of edges from the uncertain graph $\mathcal{G}$ to form a possible world $G$. However, it fails to capture the expected topological structure of $\mathcal{G}$ and typically results in a large total discrepancy $dis_{\mathcal{G}}(G)$, thereby requiring more iterations to reduce $dis_{\mathcal{G}}(G)$. To avoid this, we adopt a probability-based initialization strategy, in which each edge $e$ is independently sampled according to its existence probability $p(e)$, ensuring the sampling of $e$ follows a \textit{Bernoulli distribution} with parameter $p(e)$, where the edge exists with $p(e)$ and does not exist with $1-p(e)$.

\vspace{0.1cm}
\noindent\textbf{Overview}.
Algorithm \ref{alg:pin} outlines the process of probability-based initialization. For each edge $e \in \mathcal{E}$, we use the acceptance-rejection method \cite{flury1990acceptance} to simulate its independent sampling process (as justified in \textbf{Theorem \ref{th:bernoulli_distribution}}), where $e$ is accepted with $p(e)$ and rejected with $1-p(e)$ (lines 3-7). After all edges are processed, those accepted edges are collected to form the initial possible world.

\begin{algorithm}[t]
  \small
  \setstretch{0.85}
  \caption{\texttt{Probability-based Initialization}}
  \label{alg:pin}
  \LinesNumbered
  \KwIn{uncertain graph $\mathcal{G}=(\mathcal{V}, \mathcal{E}, \mathcal{P})$}
  \KwOut{an initial possible world $G_{0} \sqsubseteq \mathcal{G}$}

  $E \leftarrow \emptyset$, $V \leftarrow \mathcal{V}$\;
  \For{$e \in \mathcal{E}$}{
    generate a random number $r$ uniformly in [0, 1]\;
	\tcp{{\footnotesize the acceptance-rejection mechanism}}
    \If{$r \leq p(e)$}{
      add $e$ to $E$\tcc*{{\footnotesize accept $e$}}
    }
    \Else{
      \continue \tcc*{{\footnotesize reject $e$}}
    }

  }

  \Return $G_{0} \leftarrow (V, E)$\;
\end{algorithm}

In Algorithm~\ref{alg:pin}, we first initialize the possible world $G$ by setting its edge set $E$ to be empty and its node set $V$ to $\mathcal{V}$ (line~1). Then, for each edge $e \in \mathcal{E}$, we perform a sampling process as follows (lines 2-7): a random number $r$ is generated uniformly from the interval [0,1] (line 3). If $r \leq p(e)$, the edge is accepted and added to $E$ (lines 4-5); otherwise, the edge is rejected and excluded from $E$ (lines 6-7). After all edges have been processed, it returns the initial possible world $G_0 = (V, E)$ (line 8). Next, we explain the validity of using such a sampling process to simulate a Bernoulli distribution with parameter $p(e)$.

\begin{myTheorem}
  \label{th:bernoulli_distribution}
  Given an edge $e$ with existence probability $p(e)$, let $r \sim U(0,1)$. We define a discrete random variable $X_e$ and set $X_e = 1$ if $r \leqslant p(e)$, and $X_e = 0$ if $r>p(e)$. Then $X_e$ follows a Bernoulli distribution with parameter $p(e)$.
\end{myTheorem}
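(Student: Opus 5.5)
The plan is to compute the distribution of $X_e$ directly from the uniform law of $r$. Since $X_e$ takes only the values $0$ and $1$, it suffices to show $P\{X_e=1\}=p(e)$ and $P\{X_e=0\}=1-p(e)$. The events $\{X_e=1\}$ and $\{X_e=0\}$ are exactly $\{r\le p(e)\}$ and $\{r>p(e)\}$, so everything reduces to evaluating the CDF of $U(0,1)$ at the point $p(e)$.

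First, I will use the density of $r\sim U(0,1)$, which is $f(t)=1$ on $[0,1]$ and $0$ elsewhere. By Definition~\ref{def:uncertain_graph}, $p(e)\in(0,1]\subseteq[0,1]$, so
\begin{equation*}
P\{X_e=1\}=P\{r\le p(e)\}=\int_0^{p(e)}1\,dt=p(e).
\end{equation*}
Next, since $\{r\le p(e)\}$ and $\{r>p(e)\}$ are complementary events, $P\{X_e=0\}=1-p(e)$. Combining the two, the probability mass function is $P\{X_e=x\}=p(e)^x(1-p(e))^{1-x}$ for $x\in\{0,1\}$, which is the Bernoulli distribution with parameter $p(e)$.

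There is no real obstacle; the only point needing care is that $p(e)$ lies in $[0,1]$, so the CDF of the uniform distribution evaluates to $p(e)$ itself with no truncation. The boundary case $p(e)=1$ yields $X_e=1$ almost surely, which is consistent with the claim. Whether the comparison is $\le$ or $<$ makes no difference, because $P\{r=p(e)\}=0$ for a continuous distribution.

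Finally, I will note that each edge in Algorithm~\ref{alg:pin} uses a fresh, independent draw of $r$. Hence the variables $X_e$ are mutually independent across edges, and the returned $G_0$ is distributed according to Eq.~\ref{eq:possible_world}. Strictly speaking this goes beyond the statement of the theorem, but it is the fact the algorithm actually relies on.
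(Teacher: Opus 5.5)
Your proposal is correct and follows essentially the same route as the paper: both compute $P(X_e=1)=\int_0^{p(e)}1\,dx=p(e)$ from the uniform density. The only cosmetic difference is that you obtain $P(X_e=0)=1-p(e)$ by complementation, whereas the paper integrates the density over $(p(e),1]$ directly; your closing remark on independence across edges is not needed for the statement, but it correctly identifies what Algorithm~\ref{alg:pin} relies on.
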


\begin{proof}
  \label{prf:bernoulli_distribution}
Since $r$ is drawn uniformly from $U[0,1]$, the probability that $X_e = 1$ and $X_e = 0$ can be calculated by Eq. \ref{eq:bernoulli_distribution=1} and Eq. \ref{eq:bernoulli_distribution=0}. Thus $X_e$ follows a Bernoulli distribution with parameter $p(e)$.
\vspace{-0.1cm}
\begin{equation}
		\label{eq:bernoulli_distribution=1}
		P(X_e=1)=P(r\le p(e))=\int_0^{p(e)} 1\,dx=p(e)
	\end{equation}
\vspace{-0.2cm}
\begin{equation}
		\label{eq:bernoulli_distribution=0}
		P(X_e=0)=P(r>p(e))=\int_{p(e)}^{1} 1\,dx=1-p(e)
	\end{equation}
\end{proof}

\subsection{Random Selection-based Refinement}
\label{sec:random_select_based_opt}
After obtaining the initial possible world $G_0$, we iteratively refine it by reducing the total discrepancy until a low-discrepancy CRPW $G$ is returned. To this end, we adopt a random selection-based refinement implemented in an iterative manner. Specifically, in the $i$-th iteration, we randomly select an edge $e \in \mathcal{E}$ and consider changing its existence state in the current possible world $G_i$, which yields a new possible world $G_{i+1}$. To determine whether such a change is beneficial, we evaluate the difference in total discrepancy between $G_{i+1}$ and $G_i$, referred to as the \textit{total discrepancy variation} w.r.t. an edge $e$, which is formally defined as follows. If this variation is less than $0$, the state of $e$ is changed, i.e., e is removed if $e \in G_i$, and added if $e \notin G_i$, resulting in $G_{i+1}$; otherwise, $e$ remains unchanged.

\begin{myDef}[\textbf{Total Discrepancy Variation}]
	\label{def:tcn_var}
	Given an uncertain graph $\mathcal{G} = (\mathcal{V}, \mathcal{E}, \mathcal{P})$, a possible world $G_i \sqsubseteq \mathcal{G}$ at the $i$-th iteration, and an edge $e \in \mathcal{E}$, let $G_{i+1}$ denote the possible world obtained by changing the existence state of $e$ in $G_i$. The total discrepancy variation of $e$ in $G_i$, denoted as $\delta_{G_i \sqsubseteq \mathcal{G}}(e)$, is defined as the difference in total discrepancy between $G_{i+1}$ and $G_i$, as shown in Eq. \ref{eq:tcn_dis_var}.
  \begin{equation}
		\label{eq:tcn_dis_var}
		\delta_{G_i \sqsubseteq \mathcal{G}}(e) = dis_{\mathcal{G}}(G_{i+1}) - dis_{\mathcal{G}}(G_i)
	\end{equation}
\end{myDef}

For brevity, we denote $\delta_{G_i \sqsubseteq \mathcal{G}}(e)$ by $\delta_{G_i}(e)$ when context is clear.

\noindent\textbf{Overview}. Given an uncertain graph $\mathcal{G}$ and an initial possible world $G_0$, the random selection-based refinement proceeds iteratively over $k$ iterations. In the $i$-th iteration, we randomly select an edge $e_{uv} \in \mathcal{E}$, evaluate whether changing its existence state in the current possible world $G_i$ reduces the total discrepancy (i.e., $\delta_{G_i}(e_{uv}) < 0$), and if so, change the state of $e_{uv}$ to obtain $G_{i+1}$. The key of random selection-based refinement is the efficient computation of $\delta_{G_i}(e_{uv})$, which faces a challenge: Directly applying Eq.~\ref{eq:tcn_dis_var} requires traversing all edges in $\mathcal{E}$ to compute their edge discrepancies, whereas changing the state of $e_{uv}$ only affects the common neighbors associated with a subset of edges, and hence only the discrepancies of those edges change; traversing the remaining edges introduces unnecessary computation. To address this challenge, we present an efficient computation of the total discrepancy variation $\delta_{G_i}(e_{uv})$, which restricts the computation to the subset of edges whose discrepancies are affected by the state change of $e_{uv}$. Building upon this optimization, we then present the algorithm of random selection-based refinement.

\vspace{0.1cm}
\noindent\textbf{Efficient Computation of Total Discrepancy Variation}. 
To efficiently compute the total discrepancy variation $\delta_{G_i}(e_{uv})$ for the edge $e_{uv}$, we introduce an \textit{edge influence set} that identifies edges whose discrepancies are affected by the state change of $e_{uv}$ (\textbf{Theorem \ref{th:eis}}). By aggregating the variation of absolute edge discrepancy only for edges from the edge influence set, we can efficiently obtain $\delta_{G_i}(e_{uv})$ (\textbf{Theorem \ref{th:tcn_dis_var_opt1}}).

Next, we begin by establishing how the state change of $e_{uv}$ affects the number of common neighbors between the endpoints of adjacent edges (Lemma~\ref{lem:cn_change}), which underlies the identification of edge influence set.

\begin{myLemma}
  \label{lem:cn_change}
  Given an uncertain graph $\mathcal{G} = (\mathcal{V}, \mathcal{E}, \mathcal{P})$, a possible world $G_i \sqsubseteq \mathcal{G}$ at the $i$-th iteration, an edge $e_{uv} \in \mathcal{E}$, and a node $w \in C_{\mathcal{G}}(u, v)$, where $C_{\mathcal{G}}(u,v)$ denotes the set of common neighbors of $u$ and $v$ within the topological structure of $\mathcal{G}$, ignoring edge probabilities. When the existence state of $e_{uv}$ changes in $G_i$, the number of common neighbors $|C_{G_i}(u, w)|$ is affected if and only if $e_{vw} \in E_i$.
\end{myLemma}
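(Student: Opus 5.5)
The plan is to prove the lemma directly from Definition~\ref{def:cn}, by tracking exactly which neighborhoods change when the state of $e_{uv}$ is flipped. Let $G_{i+1}$ be the world obtained from $G_i$ by toggling $e_{uv}$, so that $E_{i+1} = E_i \triangle \{e_{uv}\}$. Toggling a single edge changes only two neighborhoods, and only by one element each. If $e_{uv}$ is added, then $N_{G_{i+1}}(u) = N_{G_i}(u) \cup \{v\}$ and $N_{G_{i+1}}(v) = N_{G_i}(v)\cup\{u\}$. If it is removed, $v$ leaves $N(u)$ and $u$ leaves $N(v)$. Every other neighborhood, in particular $N_{G_i}(w)$, is unchanged, since $w \neq u,v$ (because $w \in C_{\mathcal{G}}(u,v)$ and there are no self-loops).

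Next I would write the common-neighbor set of the pair $(u,w)$ before and after the toggle. Since $N(w)$ is fixed and $N(u)$ gains or loses only $v$,
\begin{equation*}
C_{G_{i+1}}(u,w) \triangle C_{G_i}(u,w) \subseteq \{v\}.
\end{equation*}
More precisely, $v$ enters or leaves $C(u,w)$ exactly when $v \in N_{G_i}(w)$, i.e., when $e_{vw} \in E_i$. If $e_{vw}\in E_i$, then adding $e_{uv}$ makes $v$ a new common neighbor of $u$ and $w$. Here $v\notin C_{G_i}(u,w)$ beforehand because $e_{uv}\notin E_i$. Thus $|C(u,w)|$ increases by one. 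Symmetrically, removing $e_{uv}$ decreases it by one. If $e_{vw}\notin E_i$, then $v \notin N_{G_i}(w)$ before or after the toggle, so $v$ never lies in the intersection and $C(u,w)$ is identical in $G_i$ and $G_{i+1}$.

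These two cases give both directions of the ``if and only if'', and the case split on whether $e_{uv}$ is added or removed gives the sign of the change ($\pm 1$). The argument for $|C_{G_i}(v,w)|$ is symmetric with the roles of $u$ and $v$ swapped, which I would remark on for use in the later edge influence set. I would also note that $e_{uw}$ must be in $\mathcal{E}$ for the pair $(u,w)$ to carry an edge discrepancy, and this is guaranteed by $w\in C_{\mathcal{G}}(u,v)$.

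There is no real obstacle. The only points needing care are confirming that $w\notin\{u,v\}$, so that $N(w)$ is untouched, and noting that whether $e_{uw}$ itself is present in $G_i$ is irrelevant, since common neighbors depend only on the two neighborhoods.
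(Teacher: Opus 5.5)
Your proposal is correct and uses the same case analysis as the paper: split on whether $e_{vw}\in E_i$, then on whether $e_{uv}$ is added or removed, which gives a change of $+1$, $-1$, or none. Your explicit observation that $N_{G_i}(w)$ is untouched because $w\notin\{u,v\}$, so that $C_{G_{i+1}}(u,w)\triangle C_{G_i}(u,w)\subseteq\{v\}$, spells out a step the paper leaves implicit in its ``unaffected'' case.
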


\begin{proof}
  $e_{uv}, e_{vw} \in E$ is the condition for $v$ to be a common neighbor of $u$ and $w$ in $G$, i.e., $v \in C_{G}(u, w)$. Let $G_{i+1}$ denote the possible world obtained by changing the existence state of $e_{uv}$ in $G_i$. We follow the existence state of $e_{vw}$ in $E_i$ to prove the lemma.

  (1) Suppose $e_{vw} \in E_i$. We next analyze the effect of the change in $e_{uv}$ state on $|C_{G_i}(u, w)|$. If $e_{uv} \in E_i$, then $e_{uv}, e_{vw} \in E_i$, so $v \in C_{G_i}(u, w)$. After removing $e_{uv}$, we have $v \notin C_{G_{i+1}}(u, w)$, so $|C_{G_{i+1}}(u, w)| = |C_{G_i}(u, w)| - 1$. If $e_{uv} \notin E_i$, then $v \notin C_{G_i}(u, w)$. After adding $e_{uv}$, $e_{uv}, e_{vw} \in E_{i+1}$, so $v \in C_{G_{i+1}}(u, w)$, giving $|C_{G_{i+1}}(u, w)| = |C_{G_i}(u, w)| + 1$. In both sub-cases, $|C_{G_i}(u, w)|$ is affected.

  (2) Suppose $e_{vw} \notin E_i$. Regardless of whether $e_{uv}$ exists or not, $v$ cannot be a common neighbor of $u$ and $w$ in either $G_i$ or $G_{i+1}$. Therefore, $|C_{G_i}(u, w)|$ is unaffected by the state change of $e_{uv}$.

  Combining both cases, $|C_{G_i}(u, w)|$ is affected by the state change of $e_{uv}$ if and only if $e_{vw} \in E_i$.
\end{proof}

\begin{myTheorem}
	\label{th:eis}
	Given an uncertain graph $\mathcal{G}=(\mathcal{V}, \mathcal{E}, \mathcal{P})$, an edge $e_{uv} \in \mathcal{E}$, a possible world $G_i \sqsubseteq \mathcal{G}$ at the $i$-th iteration (with an edge set $E_i$), the edge influence set of $e_{uv}$ in $G_i$, denoted as $S_i(e_{uv})$, is calculated by Eq. \ref{eq:eis}.
  \begin{equation}
		\label{eq:eis}
		S_i(e_{uv}) = \{e_{uw} : e_{vw} \in E_i, w \in C_{\mathcal{G}}(u,v)\}
	\end{equation}
\end{myTheorem}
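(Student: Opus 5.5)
The plan is to characterize exactly which edges of $\mathcal{E}$ have their edge discrepancy changed when the state of $e_{uv}$ is flipped in $G_i$. Since $\overline{C}_{\mathcal{G}}(\cdot,\cdot)$ is fixed by Definition~\ref{def:exp_cn}, Definition~\ref{def:edge_dis} implies that $dis_{G_i}(e_{xy})$ changes if and only if $|C_{G_i}(x,y)|$ changes. So it suffices to determine, among pairs $\{x,y\}$ with $e_{xy}\in\mathcal{E}$, those whose common-neighbor count is affected. I read $S_i(e_{uv})$ in Eq.~\ref{eq:eis} as symmetric in the roles of $u$ and $v$ (the edge is undirected). Accordingly, I would prove the statement for the edges incident to $u$ and obtain the edges $e_{vw}$ with $e_{uw}\in E_i$ by swapping $u$ and $v$.

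\textbf{Step 1: only pairs involving $u$ or $v$ can change.} Flipping $e_{uv}$ modifies only the neighbor sets $N_{G_i}(u)$, which gains or loses $v$, and $N_{G_i}(v)$, which gains or loses $u$. Every other neighbor set is unchanged. Hence for a pair $\{x,y\}$ with $x,y\notin\{u,v\}$, $C_{G_i}(x,y)=N(x)\cap N(y)$ is unchanged.

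\textbf{Step 2: the pair $(u,v)$ itself is unaffected.} Here $C(u,v)=N(u)\cap N(v)$. The node $v$ never lies in $N(v)$ and $u$ never lies in $N(u)$, so the flip leaves this intersection unchanged. Therefore $e_{uv}\notin S_i(e_{uv})$, consistent with Eq.~\ref{eq:eis}.

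\textbf{Step 3: pairs $(u,w)$ with $w\neq v$.} $N(w)$ is unchanged and $N(u)$ changes only by $v$. So $|C_{G_i}(u,w)|$ changes, by exactly $\pm1$, if and only if $v\in N_{G_i}(w)$, i.e.\ $e_{vw}\in E_i$.

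We only care about pairs that are edges of $\mathcal{E}$, so we need $e_{uw}\in\mathcal{E}$. Also $e_{vw}\in E_i\subseteq\mathcal{E}$. Together these give $w\in C_{\mathcal{G}}(u,v)$, which places us exactly in the setting of Lemma~\ref{lem:cn_change}. Invoking that lemma confirms both directions: $e_{uw}$ is affected if and only if $e_{vw}\in E_i$. This yields precisely the set in Eq.~\ref{eq:eis}, and the symmetric case handles the pairs $(v,w)$.

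The only delicate point I expect is bookkeeping rather than mathematics. One must argue that no edge outside $S_i(e_{uv})$ is affected, which is Steps 1 and 2, since Lemma~\ref{lem:cn_change} only covers $w\in C_{\mathcal{G}}(u,v)$. One must also make the $u$/$v$ symmetry explicit so the set is complete. Once these are in place, the theorem follows immediately.
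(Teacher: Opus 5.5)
Your proof is correct, and its core (Step~3) coincides with the paper's argument. The paper fixes $e_{uw}$ with $w \in C_{\mathcal{G}}(u,v)$ and invokes Lemma~\ref{lem:cn_change} to conclude that $e_{uw}$ is affected if and only if $e_{vw} \in E_i$, and then it stops. So the paper only decides which members of that candidate family are affected. It never shows that edges outside the family are unaffected: pairs avoiding $u$ and $v$, the edge $e_{uv}$ itself, and edges $e_{uw} \in \mathcal{E}$ with $w \notin C_{\mathcal{G}}(u,v)$. It also leaves the $u \leftrightarrow v$ symmetry implicit. The symmetry is visible only in Example~\ref{exp:influence_set}, where $S_i(e_{v_0v_1})$ contains edges at both endpoints, even though Eq.~\ref{eq:eis} literally lists only edges at $u$.

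Your symmetric reading is in fact necessary. Under the literal reading, the set would omit the affected edges $e_{vw}$ with $e_{uw} \in E_i$. Your Steps~1--2, together with deriving $w \in C_{\mathcal{G}}(u,v)$ from $e_{uw} \in \mathcal{E}$ and $e_{vw} \in E_i \subseteq \mathcal{E}$, supply exactly the exclusivity the paper omits. Theorem~\ref{th:tcn_dis_var_opt1} relies on that exclusivity when it discards all terms for edges outside $S_i(e_{uv})$. The paper's version buys brevity; yours buys an actual characterization.

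Two minor remarks. Because Step~3 already proves the equivalence directly from neighbor sets, the appeal to Lemma~\ref{lem:cn_change} is redundant. It is also worth stating that the two halves of the symmetric set are disjoint, since an equality $e_{uw} = e_{vw'}$ would force $w = v$. Hence no edge is counted twice in the sum of Theorem~\ref{th:tcn_dis_var_opt1}.
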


\begin{proof}
  Consider the edge $e_{uw}$ with $w \in C_{\mathcal{G}}(u, v)$ and the state change of $e_{uv}$ in $G_i$. By Definition~\ref{def:tcn_dis}, we have the edge discrepancy of $e_{uw}$ depends on $|C_{G_i}(u, w)|$. By Lemma~\ref{lem:cn_change}, when the state of $e_{uv}$ changes, $|C_{G_i}(u, w)|$ is affected if and only if $e_{vw} \in E_i$. Therefore, the edge discrepancy of $e_{uw}$ is affected by the state change of $e_{uv}$ if and only if $e_{vw} \in E_i$, which yields Eq.~\ref{eq:eis}.
\end{proof}

\begin{figure}[t]
  \setlength{\abovecaptionskip}{0.1cm}
    \centering
    \includegraphics[width=0.9\linewidth]{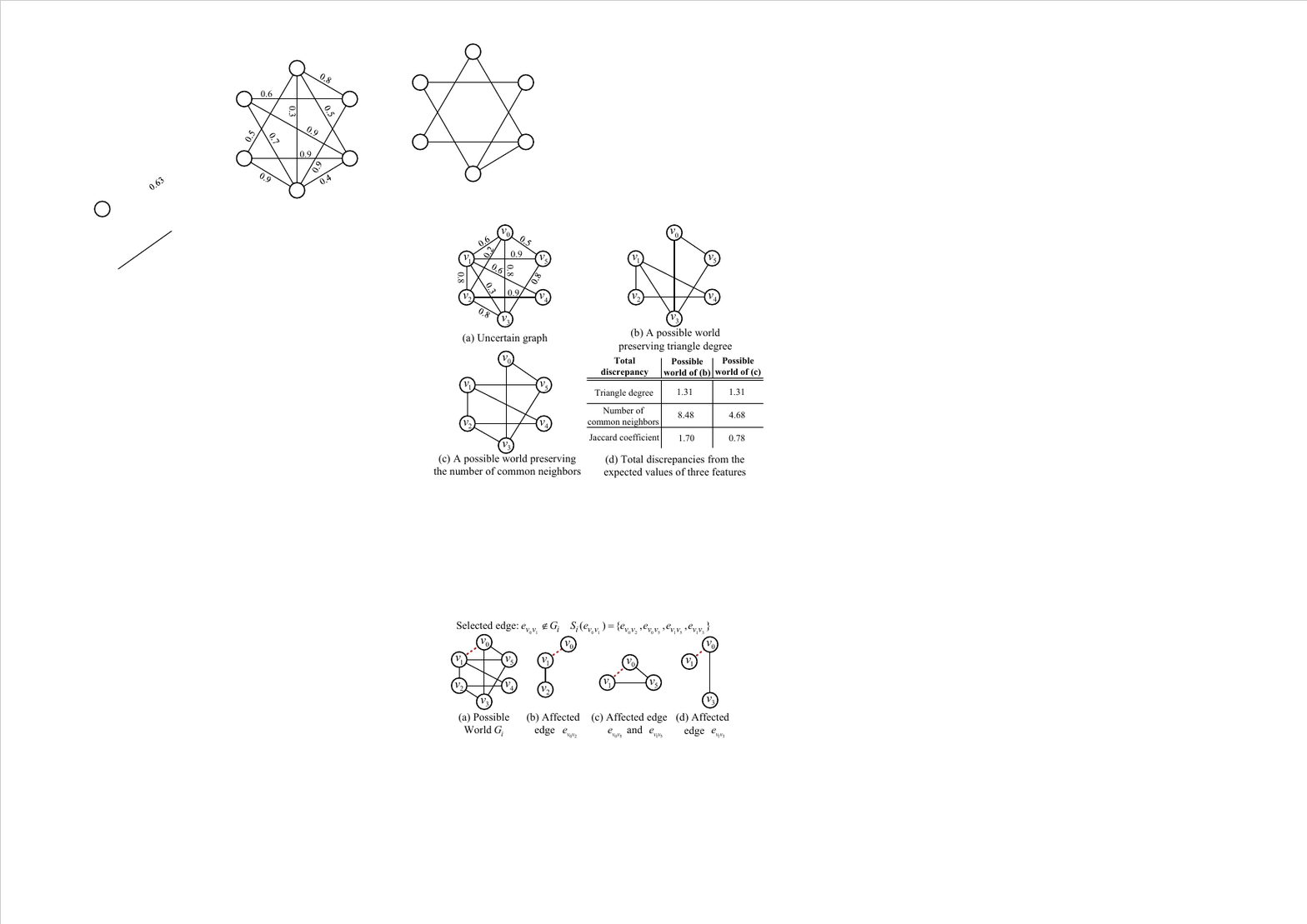}
    \caption{Identifying the edge influence set $S_i(e_{v_0v_1})$ of the selected edge $e_{v_0v_1}$ in the possible world $G_i$.}
    \label{fig:influence_set_example}
    \vspace{-0.4cm}
\end{figure}

\begin{myExample}
\label{exp:influence_set}
We use Figure~\ref{fig:influence_set_example} to illustrate how the edge influence set is identified. Figure~\ref{fig:influence_set_example}(a) shows a possible world $G_i$ of the uncertain graph $\mathcal{G}$ in Figure~\ref{fig:intro_example}(a), where the selected edge $e_{v_0v_1}$ (red dashed) is absent. In the topology of $\mathcal{G}$, $v_0$ and $v_1$ have three common neighbors: $C_{\mathcal{G}}(v_0,v_1)=\{v_2,v_5,v_3\}$. We examine them in turn. For $v_2$ in Figure~\ref{fig:influence_set_example}(b), the edge $e_{v_1v_2}$ is present in $G_i$, so changing the state of $e_{v_0v_1}$ alters the number of common neighbors of $v_0$ and $v_2$, and thus the edge discrepancy of $e_{v_0v_2}$. We therefore add $e_{v_0v_2}$ to $S_i(e_{v_0v_1})$. For $v_5$ in Figure~\ref{fig:influence_set_example}(c), both $e_{v_0v_5}$ and $e_{v_1v_5}$ are present in $G_i$, so the state change of $e_{v_0v_1}$ affects the edge discrepancies of $e_{v_0v_5}$ and $e_{v_1v_5}$, both of which are added to $S_i(e_{v_0v_1})$. For $v_3$ in Figure~\ref{fig:influence_set_example}(d), the edge $e_{v_0v_3}$ is present in $G_i$, which adds $e_{v_1v_3}$. Collecting these edges yields $S_i(e_{v_0v_1})=\{e_{v_0v_2},\, e_{v_0v_5},\, e_{v_1v_5},\, e_{v_1v_3}\}$.
\end{myExample}

Having identified $S_i(e_{uv})$ for an edge $e_{uv}$, we next characterize how the edge discrepancy of each $e_{xy} \in S_i(e_{uv})$ changes when the state of $e_{uv}$ is altered and finally present an optimized computation method of the total discrepancy variation.

\begin{myTheorem}
  \label{th:dis_change}
  Given an uncertain graph $\mathcal{G} = (\mathcal{V}, \mathcal{E}, \mathcal{P})$, a possible world $G_i \sqsubseteq \mathcal{G}$, an edge $e_{uv} \in \mathcal{E}$ with edge influence set $S_i(e_{uv})$, and $G_{i+1}$ denoting the possible world obtained by changing the existence state of $e_{uv}$ in $G_i$. For any edge $e_{xy} \in S_i(e_{uv})$, the edge discrepancy $dis_{G_{i+1}}(e_{xy})$ is given by Eq.~\ref{eq:dis_change}.
  \begin{equation}
    \label{eq:dis_change}
    dis_{G_{i+1}}(e_{xy}) =
    \begin{cases}
      dis_{G_i}(e_{xy}) - 1, & \text{if } e_{uv} \in E_i, \\
      dis_{G_i}(e_{xy}) + 1, & \text{if } e_{uv} \notin E_i.
    \end{cases}
  \end{equation}
\end{myTheorem}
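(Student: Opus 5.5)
We would prove the statement by combining the edge-discrepancy formula of Definition~\ref{def:edge_dis} with the counting argument already carried out in the proof of Lemma~\ref{lem:cn_change}. The key observation is that the expected term $\overline{C}_{\mathcal{G}}(x,y)$ in Eq.~\ref{eq:edge_dis} depends only on $\mathcal{G}$ and not on the possible world. Hence
\[
dis_{G_{i+1}}(e_{xy}) - dis_{G_i}(e_{xy}) = |C_{G_{i+1}}(x,y)| - |C_{G_i}(x,y)|,
\]
and it suffices to show that this difference of common-neighbor counts is exactly $-1$ when $e_{uv}\in E_i$ and $+1$ when $e_{uv}\notin E_i$.

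First, we fix the form of an edge of $S_i(e_{uv})$. By Eq.~\ref{eq:eis} (read symmetrically in $u$ and $v$, as in Example~\ref{exp:influence_set}), every $e_{xy}\in S_i(e_{uv})$ is of the form $e_{uw}$ with $w\in C_{\mathcal{G}}(u,v)$ and $e_{vw}\in E_i$. The symmetric case $e_{vw}$ with $e_{uw}\in E_i$ is handled identically by swapping the roles of $u$ and $v$.

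Next, we localize the change in $C(u,w)$. Since $G_i$ and $G_{i+1}$ differ only in the edge $e_{uv}$, the neighborhood $N(w)$ is identical in both worlds, because $w\neq u,v$. Likewise, $N(u)$ differs only by the single vertex $v$. Therefore $C_{G_{i+1}}(u,w)$ and $C_{G_i}(u,w)$ can differ only in whether $v$ belongs to them. Every other vertex $t\neq v$ satisfies $t\in C_{G_i}(u,w)$ if and only if $t\in C_{G_{i+1}}(u,w)$.

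Finally, we apply case (1) of the proof of Lemma~\ref{lem:cn_change}, whose hypothesis $e_{vw}\in E_i$ holds here. If $e_{uv}\in E_i$, then $v$ is a common neighbor in $G_i$ but not in $G_{i+1}$, which gives a change of $-1$. If $e_{uv}\notin E_i$, then $v$ becomes a common neighbor after the addition, which gives a change of $+1$. Substituting into the displayed identity yields Eq.~\ref{eq:dis_change}.

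The main obstacle is the second step: justifying that the count changes by exactly one and never by two. This requires ruling out that a single edge $e_{xy}$ is affected both ``through $v$'' and ``through $u$''. We would handle this by noting that an edge affected through $v$ has $u$ as an endpoint, so $u$ itself cannot be a common neighbor of that pair. Consequently the only vertex whose membership can change is $v$, and the change is exactly $\pm 1$. For the same reason, the edge $e_{uv}$ itself is not in $S_i(e_{uv})$, since $w\in C_{\mathcal{G}}(u,v)$ forces $w\neq u,v$.
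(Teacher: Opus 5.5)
Your proof is correct and follows essentially the same route as the paper. Both arguments use that $\overline{C}_{\mathcal{G}}(x,y)$ does not depend on the possible world, identify $e_{xy}=e_{uw}$ with $w\in C_{\mathcal{G}}(u,v)$ and $e_{vw}\in E_i$ via Theorem~\ref{th:eis}, and then invoke case (1) of Lemma~\ref{lem:cn_change}. Your extra steps make explicit what the paper takes for granted, namely that the count changes by exactly $\pm 1$: you handle the symmetric reading of Eq.~\ref{eq:eis}, and you check that $w\neq u,v$ forces $N(w)$ to be unchanged, so that only $v$ can enter or leave $C(u,w)$.
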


\begin{proof}
  By Theorem~\ref{th:eis}, $e_{xy} \in S_i(e_{uv})$ implies there exists $w \in C_{\mathcal{G}}(u, v)$ such that $e_{xy} = e_{uw}$ and $e_{vw} \in E_i$. By Lemma~\ref{lem:cn_change}, since $e_{vw} \in E_i$, the state change of $e_{uv}$ affects $|C_{G_i}(u, w)|$: $|C_{G_{i+1}}(u, w)| = |C_{G_i}(u, w)| - 1$ if $e_{uv} \in E_i$, and $|C_{G_{i+1}}(u, w)| = |C_{G_i}(u, w)| + 1$ if $e_{uv} \notin E_i$. Since $dis_{G_i}(e_{xy}) = |C_{G_i}(u, w)| - \overline{C}_{\mathcal{G}}(u, w)$ and $\overline{C}_{\mathcal{G}}(u, w)$ is independent of the state of $e_{uv}$, $dis_{G_{i+1}}(e_{xy})$ changes by the same amount as $|C_{G_{i+1}}(u, w)|$, yielding Eq.~\ref{eq:dis_change}.
\end{proof}

\begin{myTheorem}
  \label{th:tcn_dis_var_opt1}
  Given an uncertain graph $\mathcal{G} = (\mathcal{V}, \mathcal{E}, \mathcal{P})$, a possible world $G_i \sqsubseteq \mathcal{G}$, an edge $e_{uv} \in \mathcal{E}$ with $S_i(e_{uv})$, and $G_{i+1}$ denoting the possible world obtained by changing the existence state of $e_{uv}$ in $G_i$, the total discrepancy variation $\delta_{G_i}(e_{uv})$ is given by Eq.~\ref{eq:tcn_dis_var_opt1}, where $dis_{G_{i+1}}(e_{xy})$ for each $e_{xy} \in S_i(e_{uv})$ is determined by Theorem~\ref{th:dis_change}.
  \begin{equation}
    \small
    \label{eq:tcn_dis_var_opt1}
    \delta_{G_i}(e_{uv}) = \sum_{e_{xy} \in S_i(e_{uv})}
    \left( |dis_{G_{i+1}}(e_{xy})| - |dis_{G_i}(e_{xy})| \right)
  \end{equation}
\end{myTheorem}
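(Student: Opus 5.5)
The plan is to start from Definition~\ref{def:tcn_var} and expand both total discrepancies by Definition~\ref{def:tcn_dis}. This gives
\[
\delta_{G_i}(e_{uv}) = \sum_{e \in \mathcal{E}} \left( |dis_{G_{i+1}}(e)| - |dis_{G_i}(e)| \right).
\]
I then split $\mathcal{E}$ into $S_i(e_{uv})$ and its complement $\mathcal{E}\setminus S_i(e_{uv})$. Every term in the complement should vanish, which leaves exactly Eq.~\ref{eq:tcn_dis_var_opt1}. Theorem~\ref{th:dis_change} then supplies the values $dis_{G_{i+1}}(e_{xy})$ for the surviving terms. Since $\overline{C}_{\mathcal{G}}$ does not depend on the possible world, each term vanishes precisely when the corresponding common-neighbor count is unchanged between $G_i$ and $G_{i+1}$. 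So everything reduces to showing that toggling $e_{uv}$ changes $|C_G(x,y)|$ only for $e_{xy}\in S_i(e_{uv})$.

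For that step I would argue directly from Definition~\ref{def:cn}. Toggling $e_{uv}$ can only add or remove $v$ from $N(u)$ and $u$ from $N(v)$, so $C_G(x,y)$ can change only if $\{x,y\}$ contains $u$ or $v$. I then run through the cases.

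\begin{itemize}
\item If $\{x,y\}=\{u,v\}$, the edge $e_{uv}$ itself is unaffected: a node $w$ is a common neighbor of $u$ and $v$ through $e_{uw}$ and $e_{vw}$, never through $e_{uv}$.
\item If $\{x,y\}=\{u,w\}$ with $w\neq v$, the only candidate common neighbor that can appear or disappear is $v$. By Lemma~\ref{lem:cn_change}, the count changes iff $e_{vw}\in E_i$. This requires $e_{vw}\in\mathcal{E}$, and together with $e_{uw}\in\mathcal{E}$ it gives $w\in C_{\mathcal{G}}(u,v)$. Hence such edges are exactly those described in Theorem~\ref{th:eis}.
\item The case $\{x,y\}=\{v,w\}$ is the symmetric one: the count changes iff $e_{uw}\in E_i$ and $w\in C_{\mathcal{G}}(u,v)$.
\end{itemize}

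The main obstacle is this last symmetric case. Eq.~\ref{eq:eis} is literally written only as $\{e_{uw}\}$, yet Example~\ref{exp:influence_set} clearly includes edges of both forms, $e_{v_0v_2}$ and $e_{v_1v_3}$. I would handle this by stating explicitly that $S_i(e_{uv})$ is understood symmetrically in $u$ and $v$, as the example shows. Concretely, I apply Theorem~\ref{th:eis} and Lemma~\ref{lem:cn_change} once with the roles of $u,v$ as given and once with them swapped, and take the union. I would also note that $e_{uw}$ and $e_{vw}$ are distinct edges, so no term is double counted in the sum.

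With the complement terms shown to be zero and $S_i(e_{uv})$ shown to contain exactly the affected edges, the identity follows. Finally, substituting Eq.~\ref{eq:dis_change} gives each summand as $|dis_{G_i}(e_{xy})\mp 1|-|dis_{G_i}(e_{xy})|$, which is the form the algorithm evaluates locally.
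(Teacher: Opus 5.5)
Your proposal is correct and follows the paper's route: expand $\delta_{G_i}(e_{uv})$ via Definitions~\ref{def:tcn_var} and~\ref{def:tcn_dis}, drop the terms outside $S_i(e_{uv})$ because their common-neighbor counts are unchanged, and take the remaining values from Theorem~\ref{th:dis_change}. The paper only cites Theorem~\ref{th:eis} for the vanishing step, while your direct case analysis actually proves it (non-incident edges, $e_{uv}$ itself, $e_{uw}$ with $w\notin C_{\mathcal{G}}(u,v)$, and the $e_{vw}$ side). Your symmetric reading of Eq.~\ref{eq:eis} is genuinely needed, since the literal set omits the edges $e_{vw}$ with $e_{uw}\in E_i$, whose counts do change. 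One minor point: only the ``if'' direction of your ``vanishes precisely when'' holds, because $|d+1|-|d|=0$ at $d=-\tfrac{1}{2}$, but that direction is all you use.
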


\begin{proof}
  By Definition~\ref{def:tcn_dis}, $\delta_{G_i}(e_{uv}) = dis_{\mathcal{G}}(G_{i+1}) - dis_{\mathcal{G}}(G_i) = \sum_{e_{xy} \in \mathcal{E}} (|dis_{G_{i+1}}(e_{xy})| - |dis_{G_i}(e_{xy})|)$. By Theorem~\ref{th:eis}, only edges in $S_i(e_{uv})$ have their edge discrepancies affected by the state change of $e_{uv}$, so the terms for edges outside $S_i(e_{uv})$ vanish, yielding Eq.~\ref{eq:tcn_dis_var_opt1}. For each remaining term $e_{xy} \in S_i(e_{uv})$, Theorem~\ref{th:dis_change} gives the exact value of $dis_{G_{i+1}}(e_{xy})$, which directly determines $|dis_{G_{i+1}}(e_{xy})| - |dis_{G_i}(e_{xy})|$.
\end{proof}

\begin{algorithm}[t]
  \small
  \setstretch{0.85}
  \caption{\texttt{Random Selection-based Refinement}}
  \label{alg:rsr}
  \LinesNumbered
  \KwIn{uncertain graph $\mathcal{G}=(\mathcal{V}, \mathcal{E}, \mathcal{P})$, iterative starting point $G_{0} = (V_{0},E_{0})$, iterations $k\geq 0$}
  \KwOut{$G_{C}$}

  \For{$i \leftarrow$ 1 to $k$}{
    $G_{i} \leftarrow G_{i-1}$\;
    $e_{uv}$ $\leftarrow$ randomly select an edge from $\mathcal{E}$\;
    $S_i(e_{uv})$ $\leftarrow$ get the influence set of $e_{uv}$ \tcc*{Theorem \ref{th:eis}}
    $\delta_{G_i}(e_{uv})$ $\leftarrow$ compute total discrepancy variation via $S_i(e_{uv})$ \tcc*{Theorem \ref{th:tcn_dis_var_opt1}}
    \If{$\delta_{G_i}(e_{uv}) < 0$}{
      \If{$e_{uv} \in E_{i}$}{
        remove $e_{uv}$ from $E_{i}$\;
      }
      \Else{
        add $e_{uv}$ to $E_{i}$\;
      }
    }
    \Else{
      \continue
    }
  }

  \Return $G_{C} = G_{k}$\;
\end{algorithm}

\noindent\textbf{Algorithm of Random Selection-based Refinement}. Next, we introduce the procedure of the random selection-based refinement in Algorithm~\ref{alg:rsr}. In each iteration, starting from the current possible world $G_i$ ($i\in [1,k]$) obtained from the previous iteration (line 2), we randomly select an edge $e_{uv} \in \mathcal{E}$ and identify its edge influence set $S_i(e_{uv})$ (lines 3-4). Then, for each edge $e_{xy} \in S_i(e_{uv})$, we compute $\overline{C}_{\mathcal{G}}(x, y)$, and substitute $\overline{C}_{\mathcal{G}}(x, y)$ into Eq. \ref{eq:tcn_dis_var_opt1} to obtain $\delta_{G_i}(e_{uv})$ by Theorem~\ref{th:tcn_dis_var_opt1} (line 5). If $\delta_{G_i}(e_{uv}) < 0$, we change the existence state of $e_{uv}$ in $G_i$ to obtain $G_{i+1}$ (lines 6-10); otherwise, we reject the change and keep $G_{i+1} = G_i$ (lines 11-12). After $k$ iterations, we return the final possible world $G_k$ (line 13).

\vspace{0.1cm}
\noindent\textbf{Complexity Analysis}.
We derive the time complexity of the basic algorithm by analyzing its two stages, i.e., the probability-based initialization (Algorithm~\ref{alg:pin}) and the random selection-based refinement (Algorithm~\ref{alg:rsr}). In the initialization stage, Algorithm~\ref{alg:pin} performs a random number generation and a probability comparison for each edge in $\mathcal{E}$, each taking $O(1)$ time. Thus, this stage takes $O(|\mathcal{E}|)$ time. The random selection-based refinement consists of $k$ iterations. The cost of each iteration is divided into three parts: (1) randomly selecting an edge, which takes $O(1)$ time; (2) computing the total discrepancy variation $\delta_{G_i}(e_{uv})$ via Eq.~\ref{eq:tcn_dis_var_opt1}, whose cost is dominated by identifying the edge influence set $S_i(e_{uv})$. The cost of identifying $S_i(e_{uv})$ is in turn dominated by computing the common neighbors $C_{\mathcal{G}}(u,v)$, which is obtained by intersecting the neighbor sets of the two endpoints $u$ and $v$. Let $\bar{d}$ denote the average degree of nodes in $\mathcal{G}$. We use the hash set to store the neighbor set of each node. This intersection takes $O(\bar{d})$ time, and hence the second operation takes $O(\bar{d})$ time; and (3) changing the state of the selected edge, which takes $O(1)$ time. Therefore, each iteration takes $O(\bar{d})$ time, and the refinement stage takes $O(k\, \bar{d})$ time. Combining the two stages, the overall time complexity of the basic algorithm is $O\big(|\mathcal{E}| + k\, \bar{d}\big)$.

\section{Integer Discrepancies for Fast Refinement}
\label{sec:integer_based_refinement}

The core of Algorithm~\ref{alg:rsr} is a per-iteration decision: whether changing the state of the selected edge $e_{uv}$ reduces the total discrepancy, i.e., whether $\delta_{G_i}(e_{uv}) < 0$. By Eq.~\ref{eq:tcn_dis_var_opt1}, computing $\delta_{G_i}(e_{uv})$ requires the edge discrepancies of all $e_{xy} \in S_i(e_{uv})$; each is derived from the expected number of common neighbors and is thus a floating-point number. Consequently, every iteration performs costly floating-point arithmetic, which dominates the refinement cost; on Flickr dataset, it accounts for 63.2\% of the total running time. To reduce this cost, we observe that determining whether to change the state of $e_{uv}$ requires only the sign of $\delta_{G_i}(e_{uv})$, not its exact value. We therefore propose \textit{integer-based refinement} that rounds each edge discrepancy to the nearest integer, so that whether $\delta_{G_i}(e_{uv}) < 0$ is determined by cheap integer counting instead of costly floating-point arithmetic. Experiments show that this approach incurs limited quality loss while providing substantial efficiency gains (\S\ref{sec:experiments}).

\noindent\textbf{Overview}. We now show how rounding reduces the determination of $\delta_{G_i}(e_{uv}) < 0$ to integer counting. By Eq.~\ref{eq:tcn_dis_var_opt1}, $\delta_{G_i}(e_{uv})$ is the sum of $|dis_{G_{i+1}}(e_{xy})| - |dis_{G_i}(e_{xy})|$ over all $e_{xy} \in S_i(e_{uv})$. After rounding, $dis_{G_i}(e_{xy})$ is an integer, and by Theorem~\ref{th:dis_change} the state change of $e_{uv}$ alters $dis_{G_i}(e_{xy})$ by $+1$ or $-1$; hence each $|dis_{G_{i+1}}(e_{xy})| - |dis_{G_i}(e_{xy})|$ equals $+1$ or $-1$, and $\delta_{G_i}(e_{uv}) < 0$ holds exactly when the $-1$ terms outnumber the $+1$ terms. Therefore, the determination of $\delta_{G_i}(e_{uv}) < 0$ reduces to identifying which edges in $S_i(e_{uv})$ contribute $+1$ and which contribute $-1$. We establish this in the \textit{integer-based determination}, and then present the \textit{algorithm of integer-based refinement}.

\vspace{0.1cm}
\noindent\textbf{Integer-based Determination}. For each $e_{xy} \in S_i(e_{uv})$, whether its term $|dis_{G_{i+1}}(e_{xy})| - |dis_{G_i}(e_{xy})|$ is $+1$ or $-1$ depends on the sign of $dis_{G_i}(e_{xy})$. We therefore partition $S_i(e_{uv})$ into three subsets $S_i^{<0}(e_{uv})$, $S_i^{=0}(e_{uv})$, and $S_i^{>0}(e_{uv})$, corresponding to $dis_{G_i}(e_{xy}) < 0$, $dis_{G_i}(e_{xy}) = 0$, and $dis_{G_i}(e_{xy}) > 0$, and analyze the $|dis_{G_{i+1}}(e_{xy})| - |dis_{G_i}(e_{xy})|$ within each subset in Lemmas~\ref{lem:dis_{G_i}(e_{xy})<0}-\ref{lem:dis_{G_i}(e_{xy})=0}.

\begin{myLemma}
	\label{lem:dis_{G_i}(e_{xy})<0}
	Given an uncertain graph $\mathcal{G} = (\mathcal{V}, \mathcal{E}, \mathcal{P})$, a possible world $G_i \sqsubseteq \mathcal{G}$ at the $i$-th iteration, an $e_{uv} \in \mathcal{E}$ with influence set $S_i(e_{uv})$ and an edge $e_{xy} \in S_i^{<0}(e_{uv})$. Changing the state of $e_{uv}$ from absent to present decreases the absolute edge discrepancy of $e_{xy}$ by 1; otherwise, it increases by 1.
\end{myLemma}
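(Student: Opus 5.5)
The plan is to reduce the statement to Theorem~\ref{th:dis_change}, using the integrality of the rounded discrepancy. Throughout, I work in the integer-based setting of this section. There, $dis_{G_i}(e_{xy})$ has been rounded to the nearest integer, so membership $e_{xy}\in S_i^{<0}(e_{uv})$ means $dis_{G_i}(e_{xy})\le -1$. I write $d=dis_{G_i}(e_{xy})$ and split into the two cases of Theorem~\ref{th:dis_change} according to whether $e_{uv}\notin E_i$ or $e_{uv}\in E_i$.

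\emph{Case 1: $e_{uv}\notin E_i$, so $e_{uv}$ goes from absent to present.} Theorem~\ref{th:dis_change} gives $dis_{G_{i+1}}(e_{xy})=d+1$. Since $d\le -1$, we have $d+1\le 0$, so both values are nonpositive. Hence $|d+1|=-(d+1)=|d|-1$, and the absolute discrepancy decreases by exactly $1$.

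\emph{Case 2: $e_{uv}\in E_i$, so $e_{uv}$ is removed.} Theorem~\ref{th:dis_change} gives $dis_{G_{i+1}}(e_{xy})=d-1<0$. Therefore $|d-1|=1-d=|d|+1$, and the absolute discrepancy increases by $1$.

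Combining the two cases gives exactly the claimed dichotomy for $e_{xy}\in S_i^{<0}(e_{uv})$.

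The only delicate point is Case~1. It relies on $d$ being an integer: with an unrounded value such as $d=-0.4$, adding $1$ crosses zero and the change in absolute value would not be $-1$. I will therefore state explicitly at the start that the lemma is taken under the rounding convention of this section, so that $d<0$ implies $d\le -1$. With that in place the argument is a direct sign computation, and no other step should cause difficulty.
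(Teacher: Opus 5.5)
Your proof is correct and matches the paper's: both apply Theorem~\ref{th:dis_change} to get $dis_{G_{i+1}}(e_{xy}) = d \pm 1$ and then evaluate the absolute values by sign in each case. Your explicit use of integrality ($d<0$ forces $d \le -1$, hence $d+1 \le 0$) is a real improvement, because the paper's step $|d+1| = -(d+1)$ relies on it silently and fails for an unrounded value such as $d=-0.4$.
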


\begin{proof}
  \label{prf:dis_{G_i}(e_{xy})<0}
  Let $G_i$ and $G_{i+1}$ denote the possible worlds before and after the state change of $e_{uv}$. According to Theorem~\ref{th:dis_change}, when the existence state of $e_{uv}$ changes from absent to present, then the edge $e_{xy} \in S_i^{<0}(e_{uv})$, which satisfies $dis_{G_i}(e_{xy}) < 0$, has $dis_{G_i}(e_{xy}) + 1$ and the variation of the absolute edge discrepancy of $e_{xy}$ is given as follows:

  \begin{align*}
  \left| dis_{G_{i+1}}(e_{xy}) \right| - \left| dis_{G_i}(e_{xy}) \right|
  &= \left| dis_{G_i}(e_{xy}) + 1 \right| - \left| dis_{G_i}(e_{xy}) \right| \\
  &= -\left[ dis_{G_i}(e_{xy}) + 1 \right] + dis_{G_i}(e_{xy}) \\
  &= -1 \quad .
  \end{align*}
  
  Conversely, when the existence state of $e_{uv}$ changes from present to absent, then the edge $e_{xy} \in S_i^{<0}(e_{uv})$ has $dis_{G_i}(e_{xy}) - 1$ and the variation of the absolute edge discrepancy is given as follows:

  \begin{align*}
  \left| dis_{G_{i+1}}(e_{xy}) \right| - \left| dis_{G_i}(e_{xy}) \right|
  &= \left| dis_{G_i}(e_{xy}) - 1 \right| - \left| dis_{G_i}(e_{xy}) \right| \\
  &= -\left[ dis_{G_i}(e_{xy}) - 1 \right] + dis_{G_i}(e_{xy}) \\
  &= 1 \quad .
  \end{align*}

  Therefore, the lemma holds.
\end{proof}

With the lemma \ref{lem:dis_{G_i}(e_{xy})<0}, for any edge $e_{xy} \in S_i^{<0}(e_{uv})$, the variation of its absolute edge discrepancy under the existence state change of $e_{uv}$ can be obtained efficiently.

\begin{myLemma}
	\label{lem:dis_{G_i}(e_{xy})>0}
	Given an uncertain graph $\mathcal{G} = (\mathcal{V}, \mathcal{E}, \mathcal{P})$, a possible world $G_i \sqsubseteq \mathcal{G}$ at the $i$-th iteration, an $e_{uv} \in \mathcal{E}$ with influence set $S_i(e_{uv})$ and an edge $e_{xy} \in S_i^{>0}(e_{uv})$. Changing the state of $e_{uv}$ from absent to present increases the absolute edge discrepancy of $e_{xy}$ by 1; otherwise, it decreases by 1.
\end{myLemma}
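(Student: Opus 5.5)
We mirror the argument for Lemma~\ref{lem:dis_{G_i}(e_{xy})<0}, with the signs reversed. Let $G_i$ and $G_{i+1}$ be the possible worlds before and after the state change of $e_{uv}$. By Theorem~\ref{th:dis_change}, every $e_{xy} \in S_i(e_{uv})$ satisfies $dis_{G_{i+1}}(e_{xy}) = dis_{G_i}(e_{xy}) + 1$ when $e_{uv}$ goes from absent to present, and $dis_{G_{i+1}}(e_{xy}) = dis_{G_i}(e_{xy}) - 1$ when it goes from present to absent. We treat these two cases separately and, in each, remove the absolute values using the hypothesis $e_{xy} \in S_i^{>0}(e_{uv})$, i.e., $dis_{G_i}(e_{xy}) > 0$.

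\textbf{Absent to present.} Both $dis_{G_i}(e_{xy})$ and $dis_{G_i}(e_{xy})+1$ are positive, so both absolute values drop their bars. This gives $|dis_{G_i}(e_{xy})+1| - |dis_{G_i}(e_{xy})| = 1$, so the absolute discrepancy increases by 1.

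\textbf{Present to absent.} The step that needs care is showing that $dis_{G_i}(e_{xy}) - 1$ is still nonnegative, so that $|dis_{G_i}(e_{xy}) - 1| = dis_{G_i}(e_{xy}) - 1$. This does not follow from positivity alone for an arbitrary real value; for example, $0.3 - 1 < 0$. We therefore rely on the setting of this section, where each edge discrepancy has been rounded to the nearest integer, exactly as in the preceding lemma. With integer discrepancies, $dis_{G_i}(e_{xy}) > 0$ implies $dis_{G_i}(e_{xy}) \ge 1$, hence $dis_{G_i}(e_{xy}) - 1 \ge 0$. The difference of absolute values is then $(dis_{G_i}(e_{xy}) - 1) - dis_{G_i}(e_{xy}) = -1$, a decrease by 1.

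Combining the two cases proves the lemma. The only nonroutine point is invoking integrality in the present-to-absent case; we will state explicitly that the lemma is read under the rounded (integer) discrepancies of the integer-based refinement.
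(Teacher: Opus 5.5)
Your proof is correct and matches the paper's: both apply Theorem~\ref{th:dis_change} and remove the absolute values case by case, giving $+1$ for absent-to-present and $-1$ for present-to-absent. Your explicit use of integrality ($dis_{G_i}(e_{xy}) > 0 \Rightarrow dis_{G_i}(e_{xy}) \ge 1$) in the present-to-absent case supplies a step the paper's proof relies on without stating it. Without it, a real value $dis_{G_i}(e_{xy}) \in (0,1)$ would change by $1 - 2\,dis_{G_i}(e_{xy})$ rather than by $-1$.
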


\begin{proof}
  \label{prf:dis_{G_i}(e_{xy})>0}
  Let $G_i$ and $G_{i+1}$ denote the possible worlds before and after the state change of $e_{uv}$, respectively. Similar to the proof of Lemma \ref{lem:dis_{G_i}(e_{xy})<0}, when the existence state of $e_{uv}$ changes from absent to present, for the edge $e_{xy} \in S_i^{>0}(e_{uv})$ satisfying $dis_{G_i}(e_{xy}) > 0$, the variation of its absolute edge discrepancy is given as follows:

  \begin{align*}
  \left| dis_{G_{i+1}}(e_{xy}) \right| - \left| dis_{G_i}(e_{xy}) \right|
  &= \left| dis_{G_i}(e_{xy}) + 1 \right| - \left| dis_{G_i}(e_{xy}) \right| \\
  &= \left[ dis_{G_i}(e_{xy}) + 1 \right] - dis_{G_i}(e_{xy}) \\
  &= 1 \quad .
  \end{align*}
  
  Similarly, when the existence state of $e_{uv}$ changes from present to absent, the variation of the absolute edge discrepancy of $e_{xy}$ is given as follows:

  \begin{align*}
  \left| dis_{G_{i+1}}(e_{xy}) \right| - \left| dis_{G_i}(e_{xy}) \right|
  &= \left| dis_{G_i}(e_{xy}) - 1 \right| - \left| dis_{G_i}(e_{xy}) \right| \\
  &= \left[ dis_{G_i}(e_{xy}) - 1 \right] - dis_{G_i}(e_{xy}) \\
  &= -1 \quad .
  \end{align*}

  Therefore, the lemma holds.
\end{proof}

Leveraging Lemma~\ref{lem:dis_{G_i}(e_{xy})>0}, for any edge $e_{xy} \in S_i^{>0}(e_{uv})$, the variation of its absolute edge discrepancy under the existence state change of $e_{uv}$ can be efficiently determined.

\begin{myLemma}
	\label{lem:dis_{G_i}(e_{xy})=0}
	Given an uncertain graph $\mathcal{G} = (\mathcal{V}, \mathcal{E}, \mathcal{P})$, a possible world $G_i \sqsubseteq \mathcal{G}$ at the $i$-th iteration, an $e_{uv} \in \mathcal{E}$ with influence set $S_i(e_{uv})$ and an edge $e_{xy} \in S_i^{=0}(e_{uv})$. Regardless of the existence state change of $e_{uv}$, the variation of the absolute edge discrepancy of $e_{xy}$ is always increase by 1.
\end{myLemma}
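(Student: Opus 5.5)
The plan mirrors the proofs of Lemmas~\ref{lem:dis_{G_i}(e_{xy})<0} and~\ref{lem:dis_{G_i}(e_{xy})>0}. Let $G_i$ and $G_{i+1}$ be the possible worlds before and after the state change of $e_{uv}$. Since $e_{xy} \in S_i(e_{uv})$, Theorem~\ref{th:dis_change} gives
\[
dis_{G_{i+1}}(e_{xy}) = dis_{G_i}(e_{xy}) \pm 1,
\]
with $-1$ when $e_{uv} \in E_i$ and $+1$ when $e_{uv} \notin E_i$. In the integer-based setting of this section the discrepancies are rounded integers, and membership in $S_i^{=0}(e_{uv})$ means $dis_{G_i}(e_{xy}) = 0$. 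So the whole argument reduces to substituting $0$ into the two branches of Eq.~\ref{eq:dis_change}.

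I would split into the two cases exactly as before.

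\textbf{Absent to present} ($e_{uv} \notin E_i$). Here
\[
|dis_{G_{i+1}}(e_{xy})| - |dis_{G_i}(e_{xy})| = |0+1| - |0| = 1 .
\]

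\textbf{Present to absent} ($e_{uv} \in E_i$). Here
\[
|dis_{G_{i+1}}(e_{xy})| - |dis_{G_i}(e_{xy})| = |0-1| - |0| = 1 .
\]

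In both cases the absolute edge discrepancy increases by exactly $1$, which is the claim.

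The only point needing care is that the statement relies on $dis_{G_i}(e_{xy})$ being exactly $0$. For the raw floating-point discrepancies of Definition~\ref{def:edge_dis} this would rarely happen, and a value near but not equal to zero would not give a clean $+1$. I would therefore state explicitly that we work with the rounded integer discrepancies introduced in the overview of this section. Theorem~\ref{th:dis_change} still applies to these, because rounding subtracts the fixed quantity $\overline{C}_{\mathcal{G}}(x,y)$ rounded to an integer, which does not depend on the state of $e_{uv}$. So each discrepancy still shifts by exactly $\pm 1$. With that noted, there is no real obstacle; the proof is a direct two-case computation.
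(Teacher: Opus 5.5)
Your proposal is correct and matches the paper's proof. The paper substitutes $dis_{G_i}(e_{xy})=0$ into $dis_{G_{i+1}}(e_{xy}) = dis_{G_i}(e_{xy}) \pm 1$ from Theorem~\ref{th:dis_change} to get $|0 \pm 1| - |0| = 1$ in a single line, whereas you write the two signs as separate cases; your added remark that the argument relies on the rounded integer discrepancies is a sensible clarification that the paper leaves implicit.
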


\begin{proof}
  \label{prf:dis_{G_i}(e_{xy})=0}
  Let $G_i$ and $G_{i+1}$ denote the possible worlds before and after the state change of $e_{uv}$, respectively. The variation of the absolute edge discrepancy is derived as follows:

  \begin{align*}
  \left| dis_{G_{i+1}}(e_{xy}) \right| - \left| dis_{G_i}(e_{xy}) \right|
  &= \left| dis_{G_i}(e_{xy}) \pm  1 \right| - \left| dis_{G_i}(e_{xy}) \right| \\
  &= \left| 0 \pm 1 \right| - \left| 0 \right| \\
  &= 1 \quad .
  \end{align*}
\end{proof}

Building on Lemma~\ref{lem:dis_{G_i}(e_{xy})=0}, for any edge $e_{xy} \in S_i^{=0}(e_{uv})$, the variation of its absolute edge discrepancy can be readily obtained if the existence state of $e_{uv}$ changes.

Next, we utilize the three subsets $S_i^{<0}(e_{uv})$, $S_i^{=0}(e_{uv})$, and $S_i^{>0}(e_{uv})$ to analyze the condition under which $\delta_{G_i}(e_{uv}) < 0$, as stated in the following theorem.

\begin{myTheorem}
	\label{th:tcn_dis_opt2}
  Given an uncertain graph $\mathcal{G} = (\mathcal{V}, \mathcal{E}, \mathcal{P})$, a possible world $G_i \sqsubseteq \mathcal{G}$ at the $i$-th iteration, and an edge $e_{uv} \in \mathcal{E}$. If $e_{uv}$ is absent (present) in $G_i$ and $|S_i^{<0}(e_{uv})| > |S_i^{=0}(e_{uv})| + |S_i^{>0}(e_{uv})|$ ($ |S_i^{>0}(e_{uv})| > |S_i^{=0}(e_{uv})| + |S_i^{<0}(e_{uv})|$), then the total discrepancy variation $\delta_{G_i \sqsubseteq \mathcal{G}}(e_{uv}) < 0$.
\end{myTheorem}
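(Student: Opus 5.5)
We start from Theorem~\ref{th:tcn_dis_var_opt1}, which expresses $\delta_{G_i}(e_{uv})$ as a sum over $e_{xy}\in S_i(e_{uv})$ of the per-edge terms $|dis_{G_{i+1}}(e_{xy})| - |dis_{G_i}(e_{xy})|$. Since $S_i^{<0}(e_{uv})$, $S_i^{=0}(e_{uv})$ and $S_i^{>0}(e_{uv})$ partition $S_i(e_{uv})$ according to the sign of $dis_{G_i}(e_{xy})$, we split this sum into three partial sums, one per subset. We work in the integer-rounded setting of this section, so each $dis_{G_i}(e_{xy})$ is an integer and Lemmas~\ref{lem:dis_{G_i}(e_{xy})<0}--\ref{lem:dis_{G_i}(e_{xy})=0} apply. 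Each term is therefore exactly $+1$ or $-1$, and each partial sum is $\pm$ the cardinality of its subset.

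\textbf{Case 1: $e_{uv}$ absent in $G_i$.} The change adds $e_{uv}$. By Lemma~\ref{lem:dis_{G_i}(e_{xy})<0}, each edge of $S_i^{<0}(e_{uv})$ contributes $-1$. By Lemma~\ref{lem:dis_{G_i}(e_{xy})>0}, each edge of $S_i^{>0}(e_{uv})$ contributes $+1$. By Lemma~\ref{lem:dis_{G_i}(e_{xy})=0}, each edge of $S_i^{=0}(e_{uv})$ contributes $+1$. Hence
\[
\delta_{G_i}(e_{uv}) = -|S_i^{<0}(e_{uv})| + |S_i^{=0}(e_{uv})| + |S_i^{>0}(e_{uv})|,
\]
which is negative exactly under the stated hypothesis.

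\textbf{Case 2: $e_{uv}$ present in $G_i$.} The change removes $e_{uv}$, and the same three lemmas flip the roles of the sign classes. Edges in $S_i^{>0}(e_{uv})$ contribute $-1$, while edges in $S_i^{<0}(e_{uv})$ and $S_i^{=0}(e_{uv})$ contribute $+1$. So $\delta_{G_i}(e_{uv}) = -|S_i^{>0}(e_{uv})| + |S_i^{=0}(e_{uv})| + |S_i^{<0}(e_{uv})| < 0$ under the parenthetical hypothesis.

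The only delicate point is the implicit integrality assumption. For real-valued discrepancies, Lemmas~\ref{lem:dis_{G_i}(e_{xy})<0} and~\ref{lem:dis_{G_i}(e_{xy})>0} can fail when $|dis_{G_i}(e_{xy})|<1$. For example, if $dis_{G_i}(e_{xy})=-0.3$ and $e_{uv}$ is added, the absolute discrepancy changes by $0.7-0.3=0.4$, not $-1$. In the unrounded setting, $S_i^{=0}(e_{uv})$ would also rarely be nonempty. I would therefore state explicitly that discrepancies are the rounded integer ones, as in the overview of this section. With that convention, the three lemmas hold verbatim and the proof is just the case split above.
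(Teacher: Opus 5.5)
Your proof is correct and is essentially the paper's argument: split on whether $e_{uv}\in E_i$, use the three preceding sign-class lemmas to give each edge of $S_i^{<0}(e_{uv})$, $S_i^{=0}(e_{uv})$, $S_i^{>0}(e_{uv})$ a contribution of $\pm 1$, and read off the sign of $\delta_{G_i}(e_{uv})$ from the resulting signed count (your exact formula also yields the converse, which the paper's "must satisfy" wording suggests). Your caveat that the argument needs the rounded, integer-valued discrepancies is accurate and more careful than the paper, which leaves this assumption implicit; with real values in $(-1,0)\cup(0,1)$ the lemmas, and the theorem itself, can fail.
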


\begin{proof}
  \label{prf:tcn_dis_opt2}
  We prove Theorem~\ref{th:tcn_dis_opt2} according to the existence state of $e_{uv}$ in $G_i$. (1) If $e_{uv} \notin E_i$, according to Lemmas~\ref{lem:dis_{G_i}(e_{xy})<0}-\ref{lem:dis_{G_i}(e_{xy})=0}, when the state of $e_{uv}$ changes (i.e., $e_{uv} \in E_{i+1}$), the absolute edge discrepancies of edges in $S_i^{<0}(e_{uv})$ decrease by 1, while those in $S_i^{=0}(e_{uv})$ and $S_i^{>0}(e_{uv})$ increase by 1. For $\delta_{G_i}(e_{uv}) < 0$ to hold, it must satisfy that $|S_i^{<0}(e_{uv})| > |S_i^{=0}(e_{uv})| + |S_i^{>0}(e_{uv})|$. (2) Conversely, if $e_{uv} \in E_i$, when the state of $e_{uv}$ changes (i.e., $e_{uv} \notin E_{i+1}$), the absolute edge discrepancies of edges in $S_i^{>0}(e_{uv})$ decrease by 1, while those in $S_i^{=0}(e_{uv})$ and $S_i^{<0}(e_{uv})$ increase by 1. For $\delta_{G_i}(e_{uv}) < 0$ to hold, it must satisfy that $|S_i^{>0}(e_{uv})| > |S_i^{=0}(e_{uv})| + |S_i^{<0}(e_{uv})|$.
\end{proof}

With Theorem~\ref{th:tcn_dis_opt2}, we can efficiently determine whether $\delta_{G_i}(e_{uv}) < 0$ when the existence state of $e_{uv}$ is changed, which in turn determines whether to update $e_{uv}$ and obtain $G_{i+1}$.

\vspace{0.1cm}
\noindent\textbf{Algorithm of Integer-based Refinement}. Algorithm~\ref{alg:ir} presents the integer-based refinement. Given an uncertain graph $\mathcal{G}$ and an initial possible world $G_0$ produced by Algorithm~\ref{alg:pin}, it begins by rounding $dis_{G_0}(e)$ to the nearest integer for each edge $e \in \mathcal{E}$ (lines 1-2). Then, in each of the $k$ iterations, starting from the current possible world $G_i$ ($i \in [1,k]$) obtained from the previous iteration (line 4), we randomly select an edge $e_{uv} \in \mathcal{E}$ and identify its edge influence set $S_i(e_{uv})$ (lines 5-6). We partition $S_i(e_{uv})$ into $S_i^{<0}(e_{uv})$, $S_i^{=0}(e_{uv})$, and $S_i^{>0}(e_{uv})$ according to Lemmas~\ref{lem:dis_{G_i}(e_{xy})<0}-\ref{lem:dis_{G_i}(e_{xy})=0} and count their cardinalities (line 7). By Theorem~\ref{th:tcn_dis_opt2}, these cardinalities determine whether changing the state of $e_{uv}$ reduces $dis_{\mathcal{G}}(G)$: if $e_{uv} \notin E_i$ and $|S_i^{<0}(e_{uv})| > |S_i^{=0}(e_{uv})| + |S_i^{>0}(e_{uv})|$, we add $e_{uv}$ to $E_i$ and increment $dis_{G_i}(e_{xy})$ by 1 for each $e_{xy} \in S_i(e_{uv})$ (lines 8-10); if $e_{uv} \in E_i$ and $|S_i^{>0}(e_{uv})| > |S_i^{=0}(e_{uv})| + |S_i^{<0}(e_{uv})|$, we remove $e_{uv}$ from $E_i$ and decrement $dis_{G_i}(e_{xy})$ by 1 for each $e_{xy} \in S_i(e_{uv})$ (lines 11-13). Maintaining the integer-valued $dis_{G_i}(e_{xy})$ through the $\pm 1$ updates in lines 10 and 14 ensures the validity of the counting-based determination in subsequent iterations. After $k$ iterations, the algorithm returns $G_C = G_k$ (line 14).

\vspace{0.1cm}
\noindent\textbf{Complexity Analysis}.
We analyze the time complexity of the integer-based refinement (Algorithm~\ref{alg:ir}). Let $\bar{d}$ denote the average degree of nodes in $\mathcal{G}$, and the neighbor set of each node is stored as a hash set. The algorithm consists of two parts: the integer rounding (lines 1-2) and the iterative refinement (lines 3-13). In the rounding stage, the algorithm rounds $dis_{G_0}(e)$ to the nearest integer for each edge $e \in \mathcal{E}$, taking $O(|\mathcal{E}|)$ time. The refinement stage consists of $k$ iterations, each performing four steps. First, it randomly selects an edge $e_{uv}$ in $O(1)$ time. Second, it identifies the edge influence set $S_i(e_{uv})$, whose cost is dominated by intersecting the neighbor sets of $u$ and $v$ to compute the common neighbors $C_{\mathcal{G}}(u,v)$, taking $O(\bar{d})$ time. Third, it determines whether to change the state of $e_{uv}$ by counting $|S_i^{<0}(e_{uv})|$, $|S_i^{=0}(e_{uv})|$, and $|S_i^{>0}(e_{uv})|$, which scans $S_i(e_{uv})$ once. Since $|S_i(e_{uv})| \leq |C_{\mathcal{G}}(u,v)| $, this step takes $O(\bar{d})$ time. Fourth, if the state of $e_{uv}$ is changed, it updates $dis_{G_i}(e_{xy})$ for each $e_{xy} \in S_i(e_{uv})$, taking $O(\bar{d})$ time. Hence, each iteration takes $O(\bar{d})$ time, and the refinement stage takes $O(k\, \bar{d})$ time. Combining two stages, the integer-based refinement takes $O(|\mathcal{E}| + k\, \bar{d})$ time. It does not lower the asymptotic complexity of random selection-based refinement, but it replaces the costly floating-point arithmetic in each iteration with cheap integer counting, whose efficiency gain is demonstrated in our experiments (\S\ref{sec:experiments}).

\begin{algorithm}[t]
  \small
  \setstretch{0.85}
  \caption{\texttt{Integer-based Refinement}}
  \label{alg:ir}
  \LinesNumbered
  \KwIn{uncertain graph $\mathcal{G}=(\mathcal{V}, \mathcal{E}, \mathcal{P})$, initial possible world $G_{0} = (V_{0}, E_{0})$, iterations $k \geq 0$}
  \KwOut{$G_{C}$}

  \For{$e_{uv} \in \mathcal{E}$}{
    $dis_{G_0}(e_{uv}) \leftarrow$ round $dis_{G_0}(e_{uv})$ to the nearest integer\;
  }
  \For{$i \leftarrow 1$ to $k$}{
    $G_i \leftarrow G_{i-1}$\;
    $e_{uv} \leftarrow$ randomly select an edge from $\mathcal{E}$\;
    $S_i(e_{uv}) \leftarrow$ get the edge influence set of $e_{uv}$ \tcc*{Theorem~\ref{th:eis}}
    count $|S_i^{<0}(e_{uv})|$, $|S_i^{=0}(e_{uv})|$, $|S_i^{>0}(e_{uv})|$ \tcc*{Lemma~\ref{lem:dis_{G_i}(e_{xy})<0}-\ref{lem:dis_{G_i}(e_{xy})=0}}
    \tcp{{\footnotesize determine whether to change $e_{uv}$ by Theorem~\ref{th:tcn_dis_opt2}}}
    \If{$|S_i^{<0}(e_{uv})| > |S_i^{=0}(e_{uv})| + |S_i^{>0}(e_{uv})|$ and $e_{uv} \notin E_i$}{
      add $e_{uv}$ to $E_i$\;
      $dis_{G_i}(e_{xy}) \leftarrow dis_{G_i}(e_{xy}) + 1$ for all $e_{xy} \in S_i(e_{uv})$\;
    }
    \If{$|S_i^{>0}(e_{uv})| > |S_i^{=0}(e_{uv})| + |S_i^{<0}(e_{uv})|$ and $e_{uv} \in E_i$}{
      remove $e_{uv}$ from $E_i$\;
      $dis_{G_i}(e_{xy}) \leftarrow dis_{G_i}(e_{xy}) - 1$ for all $e_{xy} \in S_i(e_{uv})$\;
    }
  }

  \Return $G_{C} = G_k$\;
\end{algorithm}

\section{Beta-based Adaptive Termination}
\label{sec:beta_based_termination}

The integer-based refinement in \S\ref{sec:integer_based_refinement} requires users to set the iteration count $k$ in advance. However, users actually care about the quality of the resulting possible world rather than the iteration count. Since the $k$ required for a given quality varies with graph structure, when a user has a desired quality in mind, there is no reliable way to know how large $k$ must be to reach it on the graph at hand, which makes $k$ difficult to set appropriately. This motivates an adaptive alternative: rather than forcing users to guess an iteration count, we let them specify the desired quality directly, so that the refinement terminates on its own once that quality is attained.

To enable this, we first need a way to measure how well the current possible world $G_i$ has been refined. Recall that the refinement reduces $dis_{\mathcal{G}}(G_i)$ by changing the existence states of edges. Hence, the proportion of edges whose state change reduces $dis_{\mathcal{G}}(G_i)$ in $\mathcal{E}$ reflects the current refinement degree of $G_i$. Based on this observation, we define an edge $e_{uv} \in \mathcal{E}$ as \textit{state-changeable} under $G_i$ if changing its existence state reduces $dis_{\mathcal{G}}(G_i)$, and take the proportion of state-changeable edges in $\mathcal{E}$, termed the \textit{state-changeable edge proportion} (SEP for short), as the measure. As the refinement proceeds, fewer state-changeable edges remain, so the SEP tends to decrease. How low the SEP must fall for $G_i$ to be considered well refined is left to the user. We therefore introduce a \textit{quality threshold} $\tau_q \in (0, 1)$ as the user's target on the SEP: once the SEP falls below $\tau_q$ (SEP $< \tau_q$), $G_i$ is regarded as refined to the degree the user desires, and the refinement can terminate.

However, computing the SEP exactly requires enumerating every edge in $\mathcal{E}$, which is infeasible. We therefore estimate the SEP through edge sampling. Since the Beta distribution is well suited to modeling an unknown proportion, we adopt it to model the SEP estimation, through which the confidence that the SEP $< \tau_q$ can be quantified. The user further specifies a \textit{confidence threshold} $\tau_c \in (0, 1)$, the minimum confidence they require, and the refinement terminates once the confidence that SEP $< \tau_q$ reaches $\tau_c$.

In the following, we model the estimation of the SEP via the Beta distribution in \S\ref{sec:beta_modeling}. Built on this model, \S\ref{sec:beta_criterion} establishes a termination criterion for the integer-based refinement. \S\ref{sec:beta_algorithm} presents the complete algorithm of Beta-based integer refinement.

\subsection{Beta-based Estimation Modeling}
\label{sec:beta_modeling}

The Beta distribution, denoted by $\mathrm{Beta}(\alpha, \beta)$, is a continuous probability distribution supported on $[0, 1]$, with shape parameters $\alpha, \beta > 0$. Specifically, given a series of independent Bernoulli trials with $\alpha - 1$ successes and $\beta - 1$ failures, $\mathrm{Beta}(\alpha, \beta)$ models the distribution of the unknown success probability. The probability density function $f(x; \alpha, \beta)$ of $\mathrm{Beta}(\alpha, \beta)$ is given by Eq.~\ref{eq:beta_pdf}, where $B(\alpha, \beta)$ denotes the Beta function \cite{abramowitz1948handbook}. For fixed parameters $\alpha$ and $\beta$, the value of $f(x; \alpha, \beta)$ at a given $x$ indicates how credible it is that the true success probability takes the value $x$.

\begin{equation}
f(x; \alpha, \beta) = \frac{x^{\alpha-1}(1-x)^{\beta-1}}{B(\alpha, \beta)}, \quad x \in [0, 1]
\label{eq:beta_pdf}
\end{equation}

\vspace{0.1cm}
\noindent\textbf{Overview}. We model the SEP estimation via $\mathrm{Beta}(\alpha, \beta)$, where the parameters $\alpha$ and $\beta$ are determined by the number of successes and failures through edge sampling, respectively. Therefore, in the following, we first describe the process of edge sampling; and then model the SEP estimation via $\mathrm{Beta}(\alpha, \beta)$.

\vspace{0.1cm}
\noindent\textbf{Process of Edge Sampling.} To estimate the SEP, we repeatedly sample $n$ edges uniformly at random from $\mathcal{E}$, where $n$ denotes the total number of edge samples. For each sampled edge, we record a success if it is state-changeable and a failure otherwise, constituting a Bernoulli trial whose success probability equals the true SEP. After $n$ samples, we denote the number of successes as $m$ and the number of failures as $n - m$. These values are used to determine the parameters of the Beta distribution $\mathrm{Beta}(\alpha, \beta)$.

\vspace{0.1cm}
\noindent\textbf{Beta Distribution Modeling of SEP Estimation.} Given $m$ state-changeable edges ($m$ successes) and $n - m$ non-state-changeable edges ($n - m$ failures) from the edge sampling, we model the estimation of SEP via $\mathrm{Beta}(\alpha, \beta)$, where $\alpha = m+1$ and $\beta = n-m+1$. Accordingly, the probability density function of $\mathrm{Beta}(\alpha, \beta)$ can be expressed as $f(x; m+1, n-m+1)$, which reflects how credible it is that the true SEP takes the value $x$ with $m$ and $n-m$.

\subsection{Termination Criterion}
\label{sec:beta_criterion}

\vspace{0.1cm}
\noindent\textbf{Overview}. Built on the Beta model in \S\ref{sec:beta_modeling}, which characterizes how credible each possible value of the true SEP is, the confidence that SEP $< \tau_q$ can be quantified. We therefore first formalize the termination criterion for the integer-based refinement. Since the confidence that SEP $< \tau_q$ is computed via $f(x; \alpha, \beta)$ whose parameters depend on the total number of edge samples $n$, we then determine the required sample size $n$ to make the criterion operational.

\vspace{0.1cm}
\noindent\textbf{Termination Criterion.} Since $f(x; \alpha, \beta)$ reflects how credible it is that the true SEP takes the value $x$, the confidence that SEP $< \tau_q$ is obtained by integrating $f(x; m+1, n-m+1)$ over $[0, \tau_q]$, where $n$ is the total number of edge samples and $m$ is the number of state-changeable edges observed among them. The termination criterion is formalized as follows.

\begin{myTheorem}
\label{th:beta_termination}
Given an uncertain graph $\mathcal{G} = (\mathcal{V}, \mathcal{E}, \mathcal{P})$, a possible world $G_i \sqsubseteq \mathcal{G}$ at the $i$-th iteration, the total number of edge samples $n$, the number of state-changeable edges observed $m$, a quality threshold $\tau_q \in (0,1)$, and a confidence threshold $\tau_c \in (0,1)$, the integer-based refinement terminates if and only if Eq.~\ref{eq:termination_criterion} holds.
\begin{equation}
\int_0^{\tau_q} f(x; m+1, n-m+1) \, dx \geq \tau_c
\label{eq:termination_criterion}
\end{equation}
\end{myTheorem}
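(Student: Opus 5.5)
This statement is essentially a definitional criterion. The plan is to justify it as the formalization of ``terminate once the confidence that SEP $<\tau_q$ reaches $\tau_c$.'' That amounts to showing the left-hand side of Eq.~\ref{eq:termination_criterion} is exactly that confidence.

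First I would set up the Bayesian model. Let $\theta\in[0,1]$ denote the true SEP of $G_i$. As described in the Process of Edge Sampling, each of the $n$ sampled edges is a Bernoulli trial with success probability $\theta$, sampled uniformly and independently from $\mathcal{E}$ with $G_i$ held fixed. The likelihood of observing $m$ state-changeable edges is therefore proportional to $\theta^m(1-\theta)^{n-m}$. Placing the uninformative uniform prior $\mathrm{Beta}(1,1)$ on $\theta$, Bayes' rule gives a posterior density proportional to $\theta^{m}(1-\theta)^{n-m}$. Normalizing by $B(m+1,n-m+1)=\int_0^1\theta^m(1-\theta)^{n-m}d\theta$ yields exactly $f(\theta;m+1,n-m+1)$ from Eq.~\ref{eq:beta_pdf}. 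This matches the parameter choice $\alpha=m+1$, $\beta=n-m+1$ in \S\ref{sec:beta_modeling}.

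Second, the posterior probability of the event $\{\theta<\tau_q\}$ is $\int_0^{\tau_q} f(x;m+1,n-m+1)\,dx$, which is the regularized incomplete Beta function $I_{\tau_q}(m+1,n-m+1)$. It is therefore precisely the confidence that the desired quality has been reached. The refinement is specified to stop exactly when this confidence is at least $\tau_c$ and to continue otherwise, so both directions of the ``if and only if'' follow.

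I would add a monotonicity remark to show the criterion is well behaved. For fixed $n$, $I_{\tau_q}(m+1,n-m+1)$ is decreasing in $m$, via the binomial-tail identity $I_{\tau_q}(m+1,n-m+1)=P[\mathrm{Bin}(n+1,\tau_q)\ge m+1]$. Consequently, as refinement lowers the SEP and hence the typical $m$, the criterion becomes easier to satisfy. The main obstacle is conceptual rather than technical. Because $G_i$ changes between iterations, the sampled trials must all be taken with respect to a fixed $G_i$, or else treated as approximately identically distributed. I would state this assumption explicitly, so that the i.i.d.\ Bernoulli likelihood, and hence the posterior, is valid.
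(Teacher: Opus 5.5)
Your proposal is correct and takes the same definitional route as the paper. The paper's proof simply identifies $\int_0^{\tau_q} f(x;m+1,n-m+1)\,dx$ as the confidence that SEP $<\tau_q$ and declares termination exactly when this integral reaches $\tau_c$; your uniform-prior derivation of the $\mathrm{Beta}(m+1,n-m+1)$ posterior spells out what the paper takes from its modeling in \S\ref{sec:beta_modeling}, and your fixed-$G_i$ caveat corresponds to the paper's own Remark after Algorithm~\ref{alg:beta_integer}.
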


\begin{proof}
Once the confidence that SEP $< \tau_q$ reaches $\tau_c$, the current possible world $G_i$ has been refined to the degree the user expects, and the integer-based refinement can be terminated. The confidence is given by the integral of $f(x; m+1, n-m+1)$ over $[0, \tau_q]$, i.e., $\int_0^{\tau_q} f(x; m+1, n-m+1) \, dx$. Therefore, the termination condition is satisfied if and only if this integral is at least $\tau_c$.
\end{proof}

The termination criterion in Theorem~\ref{th:beta_termination} depends on both $m$ and $n$. While $m$ is observed dynamically during edge sampling, $n$ must be determined before the sampling begins. We therefore determine the required sample size $n$ as follows.

\vspace{0.1cm}
\noindent\textbf{Sample-Size Determination.}
The criterion in Eq.~\ref{eq:termination_criterion} depends on the sample size $n$ and the observed success count $m$. Increasing $n$ strengthens the statistical evidence for SEP estimation but also raises the sampling cost. To balance them, we require the confidence that the true SEP is below $\tau_q$ to reach $\tau_c$ while minimize the sampling cost. We denote the minimum sample size by $n^*$. Given $\tau_q$ and $\tau_c$, a smaller $m$ require a smaller $n*$ to satisfy Eq.~\ref{eq:termination_criterion} \cite{hornik2024two}. Although $m=0$ minimizes $n^*$, it means a zero-success sample and provides no tolerance for an occasional state-changeable edge. Therefore, we set $m=1$ as a practical design choice and use the corresponding minimum sample size $n^*$ as the number of sampling iterations. The following theorem guarantees the existence of such an $n^*$.

\begin{myTheorem}
\label{th:min_sample_size}
For any $\tau_q, \tau_c \in (0, 1)$, there exists a minimum positive integer $n$, i.e., $n^*$, such that Eq. \ref{eq:min_sample_size} is satisfied.
\begin{equation}
\int_0^{\tau_q} f(x; 2, n^*) \, dx \geq \tau_c
\label{eq:min_sample_size}
\end{equation}
\end{myTheorem}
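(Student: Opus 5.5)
Our plan is to compute the integral in closed form and show that it tends to $1$ as $n\to\infty$. Existence of a minimum then follows from the well-ordering of the positive integers. Write $F_n(\tau_q)=\int_0^{\tau_q} f(x;2,n)\,dx$. For $\alpha=2$, $\beta=n$ the Beta function is $B(2,n)=\frac{\Gamma(2)\Gamma(n)}{\Gamma(n+2)}=\frac{1}{n(n+1)}$, so by Eq.~\ref{eq:beta_pdf} the density is $f(x;2,n)=n(n+1)\,x(1-x)^{n-1}$. Rather than integrate by parts, we will use the standard identity between the regularized incomplete Beta function with integer parameters and a binomial tail: $I_{\tau_q}(2,n)=P\{\mathrm{Bin}(n+1,\tau_q)\ge 2\}$. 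This gives
\begin{equation*}
F_n(\tau_q)=1-(1-\tau_q)^{n+1}-(n+1)\,\tau_q\,(1-\tau_q)^{n}.
\end{equation*}
We will double-check this formula by differentiating the right-hand side in $\tau_q$ and confirming that it returns $n(n+1)\tau_q(1-\tau_q)^{n-1}$, and that it vanishes at $\tau_q=0$.

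Next, fix $\tau_q\in(0,1)$ and set $q=1-\tau_q\in(0,1)$. Both error terms go to zero. First, $q^{n+1}\to 0$. Second, $(n+1)\tau_q q^{n}\to 0$, because geometric decay dominates linear growth; for example, the ratio of consecutive terms is $\frac{n+2}{n+1}q\to q<1$. Hence $F_n(\tau_q)\to 1$. Since $\tau_c<1$, there is some $N$ with $F_N(\tau_q)\ge\tau_c$. The set $\{n\in\mathbb{Z}_{>0}: F_n(\tau_q)\ge\tau_c\}$ is therefore a nonempty set of positive integers, so it has a least element, which we take as $n^*$. This already proves the statement as written, since Eq.~\ref{eq:min_sample_size} asks only for existence of a minimal $n$ satisfying the inequality.

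To make $n^*$ operational, we also want the satisfying set to be an up-set $\{n\ge n^*\}$, so that $n^*$ can be found by incrementing $n$ until the test passes. For this we would show that $F_n(\tau_q)$ is nondecreasing in $n$. In the binomial form, $1-F_n(\tau_q)=P\{\mathrm{Bin}(n+1,\tau_q)\le 1\}$. Adding one more Bernoulli trial can only stochastically increase the count, so this probability is nonincreasing in $n$. Alternatively, a direct computation shows that the difference of consecutive values of $q^{n}(q+(n+1)\tau_q)$ is $-\,n\tau_q^2 q^{n}\le 0$.

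The step we expect to be the main obstacle is obtaining the closed form correctly, in particular handling $n=1$, where $(1-x)^{0}=1$ and the formula must still hold. We will therefore verify the derivative identity carefully and not rely on recalling the binomial identity alone. The limit and well-ordering steps are routine.
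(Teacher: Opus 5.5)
Your argument is correct, and it is more self-contained than the paper's. The paper's proof cites an external reference for the monotonicity of $\int_0^{\tau_q} f(x;2,n)\,dx$ in $n$. It then asserts, without derivation, that the integral tends to $1>\tau_c$, and concludes that it ``first reaches'' $\tau_c$ at some $n^*$.

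You instead compute $B(2,n)=1/(n(n+1))$ and obtain the closed form $1-(1-\tau_q)^{n+1}-(n+1)\tau_q(1-\tau_q)^{n}$. It checks out, including at $n=1$, where it reduces to $\tau_q^2$. This turns the limit into an actual proof. You also correctly note that existence of a minimum needs only nonemptiness plus well-ordering, so monotonicity is not logically required for the statement as written.

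What monotonicity does buy, and what the paper's search procedure tacitly relies on, is that the satisfying set is exactly $\{n\ge n^*\}$. Your stochastic-domination argument supplies this without the citation. Your closed form also makes each evaluation in that search elementary.

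There is one small arithmetic slip in your alternative check. With $q=1-\tau_q$, one has $q^{n+1}\bigl(q+(n+2)\tau_q\bigr)-q^{n}\bigl(q+(n+1)\tau_q\bigr)=-(n+1)\tau_q^{2}q^{n}$, not $-n\tau_q^{2}q^{n}$. Equivalently, the difference is minus $\tau_q\cdot P\{\mathrm{Bin}(n+1,\tau_q)=1\}$. The sign, and hence your conclusion, is unaffected.
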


\begin{proof}
Since the integral is monotonically increasing as $n$ increases \cite{hornik2024two}, the integral over $[0, \tau_q]$ increases monotonically toward $1 > \tau_c$ as $n \to \infty$. Therefore, there exists a $n^*$ such that the integral first reaches $\tau_c$.
\end{proof}

With the existence of $n^*$ established, we next determine its value. Since the integral $\int_0^{\tau_q} f(x; 2, n^*-1)\,dx$ cannot be analytically inverted with respect to $n^*$, so $n^*$ must be determined via numerical search. To determine the value of $n^*$ more quickly, we locate $n^*$ by exponential search: starting from an initial value $n_0 = \lceil 1/\tau_q \rceil$, we repeatedly double $n_0$ until the integral first exceeds $\tau_c$, requiring only $O(\log n^*)$ evaluations of the integral.

\subsection{Algorithm of beta-based Integer Refinement}
\label{sec:beta_algorithm}

We present the algorithm of Beta-based integer refinement in Algorithm~\ref{alg:beta_integer}. It takes an uncertain graph $\mathcal{G}$ and an initial possible world $G_0$ as input, and replaces the fixed iteration count $k$ in Algorithm~\ref{alg:ir} with a quality threshold $\tau_q$ and a confidence threshold $\tau_c$, enabling adaptive termination of the integer-based refinement based on the criterion in Theorem~\ref{th:beta_termination}.

\vspace{0.1cm}
Algorithm~\ref{alg:beta_integer} begins by rounding $dis_{G_0}(e_{uv})$ to the nearest integer for each edge $e_{uv} \in \mathcal{E}$ (lines 1-2). It then determines $n^*$, the minimum sample size satisfying Eq.~\ref{eq:min_sample_size}, via exponential search (line 3). The state-changeable edge counter $m$ and the iteration counter $i$ are initialized to $0$ and $1$, respectively (line 4).
In each iteration, the algorithm initializes $G_i$ from $G_{i-1}$ (line 6), randomly selects an edge $e_{uv} \in \mathcal{E}$, and determines whether to change its state in $G_i$ via Theorem~\ref{th:tcn_dis_opt2} (lines 7-17), following the same procedure as lines 5-13 in Algorithm~\ref{alg:ir}. If the state of $e_{uv}$ is changed, $e_{uv}$ is state-changeable under $G_{i}$, and $m$ is incremented by 1 (lines 13, 17).
After each $n^*$ iterations, the termination criterion in Theorem~\ref{th:beta_termination} is evaluated (lines 19-20). If the criterion is satisfied, the integer-based refinement terminates with $G_i$ (lines 20-21); otherwise, $m$ is reset to $0$ and the integer-based refinement proceeds to the next iteration (line 22). Once the integer-based refinement terminates, it outputs $G_C = G_i$.

\vspace{0.1cm}
\noindent\textbf{Remark.} Strictly speaking, the termination criterion in Theorem~\ref{th:beta_termination} assumes that the $n^*$ samples are drawn from a fixed possible world. However, for the integer-based refinement in Algorithm~\ref{alg:beta_integer}, each iteration serves simultaneously as one edge sample, so the $n^*$ samples within a window span a sequence of possible worlds $G_i, G_{i+1}, \ldots, G_{i+n^*-1}$. This design avoids the overhead of a separate sampling procedure: rather than explicitly sampling $n^*$ edges from $\mathcal{E}$ at each checkpoint, the algorithm reuses the edges already selected during the integer-based refinement iterations, incurring no additional cost. This deviation from the theoretical assumption remains acceptable. Since each state change in the integer-based refinement tends to reduce the SEP, the possible worlds in the early part of the window, i.e., $G_i, G_{i+1}$, tend to have a higher SEP than $G_{i+n^*-1}$ at the end of the window. Consequently, the observed $m$ tends to be larger than what would be obtained by sampling exclusively from $G_{i+n^*-1}$. As the integral in Eq.~\ref{eq:termination_criterion} decreases monotonically with $m$ \cite{hornik2024two}, an overestimated $m$ makes the criterion harder to satisfy, i.e., more conservative. Therefore, once the criterion is satisfied, the true SEP of $G_{i+n^*-1}$ is reliably at or below $\tau_q$.

\begin{algorithm}[t]
  \small
  \setstretch{0.85}
  \caption{\texttt{Beta-based Integer Refinement}}
  \label{alg:beta_integer}
  \LinesNumbered
  \KwIn{uncertain graph $\mathcal{G}=(\mathcal{V}, \mathcal{E}, \mathcal{P})$, initial possible world $G_{0} = (V_{0}, E_{0})$, quality threshold $\tau_q \in (0,1)$, confidence threshold $\tau_c \in (0,1)$}
  \KwOut{$G_{C}$}

  \For{$e_{uv} \in \mathcal{E}$}{
    $dis_{G_0}(e_{uv}) \leftarrow$ round $dis_{G_0}(e_{uv})$ to the nearest integer\;
  }
  $n^* \leftarrow$ minimum sample size satisfying Eq.~\ref{eq:min_sample_size} \tcc*{Theorem~\ref{th:min_sample_size}}  
  $m \leftarrow 0, i \leftarrow 1$\;

  \While{true}{
    $G_i \leftarrow G_{i-1}$\;
    $e_{uv} \leftarrow$ randomly select an edge from $\mathcal{E}$\;
    $S_i(e_{uv}) \leftarrow$ get the edge influence set of $e_{uv}$ \tcc*{Theorem~\ref{th:eis}}
    count $|S_i^{<0}(e_{uv})|$, $|S_i^{=0}(e_{uv})|$, $|S_i^{>0}(e_{uv})|$ \tcc*{Lemma~\ref{lem:dis_{G_i}(e_{xy})<0}-\ref{lem:dis_{G_i}(e_{xy})=0}}
    \tcp{{\footnotesize determine whether to change $e_{uv}$ by Theorem~\ref{th:tcn_dis_opt2}}}
    \If{$|S_i^{<0}(e_{uv})| > |S_i^{=0}(e_{uv})| + |S_i^{>0}(e_{uv})|$ and $e_{uv} \notin E_i$}{
      add $e_{uv}$ to $E_i$\;
      $dis_{G_i}(e_{xy}) \leftarrow dis_{G_i}(e_{xy}) + 1$ for all $e_{xy} \in S_i(e_{uv})$\;
      $m \leftarrow m + 1$\;
    }
    \If{$|S_i^{>0}(e_{uv})| > |S_i^{=0}(e_{uv})| + |S_i^{<0}(e_{uv})|$ and $e_{uv} \in E_i$}{
      remove $e_{uv}$ from $E_i$\;
      $dis_{G_i}(e_{xy}) \leftarrow dis_{G_i}(e_{xy}) - 1$ for all $e_{xy} \in S_i(e_{uv})$\;
      $m \leftarrow m + 1$\;
    }
    \If{$i \bmod n^* = 0$}{
      \If{$\int_0^{\tau_q} f(x;\, m+1,\, n^*-m+1)\,dx \geq \tau_c$}{
        \textbf{break} \tcc*{Theorem~\ref{th:beta_termination}}}
      $m \leftarrow 0$\;
    }
    $i \leftarrow i + 1$\;
  }

  \Return $G_{C} = G_i$\;
\end{algorithm}

\section{Experiments}
\label{sec:experiments}

Our experiments address the following questions. All methods are implemented in Python 3.12 and run on a Linux server with a 3.7\,GHz CPU and 128\,GB memory. Our code is available at \cite{code}.

\smallskip
\noindent\textbf{Q1:} How well do the CRPW preserve the structural features of the uncertain graph,
especially those centered on the number of common neighbors?
(\S\ref{sec:exp_effectiveness})

\noindent\textbf{Q2:} How efficient are the proposed algorithms?
(\S\ref{sec:exp_efficiency})

\noindent\textbf{Q3:} Is the Beta-based termination effective in controlling
the quality of the resulting representative possible world?
(\S\ref{sec:exp_beta})

\noindent\textbf{Q4:} How sensitive are our methods to their key parameters?
(\S\ref{sec:exp_sensitivity})

\begin{table}[t]
  \small
  \caption{Statistics of datasets}
    \label{tab:datasets}
    \vspace{-0.4cm}
  \begin{tabular}{c||c|c|cc}
    \multirow{2}{*}{\textbf{Datasets} $\downarrow$} & \multirow{2}{*}{\textbf{Nodes}} & \multirow{2}{*}{\textbf{Edges}} & \multicolumn{2}{c}{\boldmath$\overline{C}_{\mathcal{G}}(u,v)$}       \\ \cline{4-5} 
                              &                        &                        & \multicolumn{1}{c|}{\textbf{mean $\pm$ SD}}      & \textbf{\{25\%, 50\%, 75\%\}}       \\ \hline \hline
    Email                     & 36,692                 & 183,831                & \multicolumn{1}{c|}{2.98 $\pm$ 4.21} & \{0.63, 1.55, 3.68\} \\ \hline
    Flickr                    & 720,504                & 16,003,787             & \multicolumn{1}{c|}{1.36 $\pm$ 2.96} & \{0.04, 0.22, 1.13\} \\ \hline
    Biomine                   & 1,863,844              & 34,047,306             & \multicolumn{1}{c|}{4.05 $\pm$ 7.61} & \{0.40, 1.42, 4.08\}  \\ \hline
    DBLP                      & 3,255,283              & 17,165,252             & \multicolumn{1}{c|}{3.03 $\pm$ 3.40} & \{1.26, 2.29, 3.62\} \\ 
  \end{tabular}
  \vspace{-0.5cm}
\end{table}

\subsection{Experimental Setup} 
\label{sec:exp_setup}

\noindent\textbf{Datasets.}
Table~\ref{tab:datasets} summarizes the statistics of four real-world uncertain graphs from different domains used in our experiments.
(1) Email \cite{klimt2004enron} is an email communication network. Each node is an email address, and an undirected edge connects two addresses if at least one email was sent between them. The edge probabilities are generated from a uniform distribution.
(2) Flickr \cite{potamias2010knn} is a photo-sharing social network. Each node is a user, and an edge connects two users who share at least oneinterest group. The probability of an edge is set to the Jaccard coefficient of the interest groups of its two endpoints.
(3) Biomine \cite{eronen2012biomine} is a protein interaction network. Each node is a protein, and an undirected edge connects two interacting proteins. The probability of an edge reflects the confidence that the corresponding interaction exists.
(4) DBLP \cite{ley2002dblp, potamias2010knn} is an academic collaboration network. Each node is an author, and an edge connects two authors who have coauthored at least one publication. The probability of an edge is derived from an exponential function of the number of collaborations between its two endpoints. 
The last two columns in Table~\ref{tab:datasets} summarize the distribution of $\overline{C}_{\mathcal{G}}(u,v)$ over all edges $(u,v) \in \mathcal{E}$, reporting its mean, standard deviation together with its 25th, 50th, and 75th percentiles.

\noindent\textbf{Methods.}
We compare our three methods against three baselines.
\noindent\underline{Our methods.} (1) P+RSR is our basic algorithm, consisting of two stages: probability-based initialization (P, \S\ref{sec:basic_algorithm}), which produces an initial possible world $G_0$, and random selection-based refinement (RSR), which reduces the total discrepancy of $G_0$. (2) P+IR optimizes the efficiency of RSR with integer-based refinement (IR, \S\ref{sec:integer_based_refinement}), which replaces the floating-point computation in each iteration with integer counting. (3) P+BIR builds upon IR with Beta-based integer refinement (BIR, \S\ref{sec:beta_based_termination}), which adaptively terminates the refinement instead of running a fixed number of iterations.

\noindent\underline{Baselines.}
(4) ADR (Average Degree Rewiring) and (5) ABM (Approximate B-Matching) \cite{parchas2014pursuit} construct a degree-based representative possible world that preserves the expected node degrees. (6) TRPW \cite{song2016triangle} constructs a triangle-based representative possible world that preserves both the expected node degrees and the expected triangle degrees.

\noindent\textbf{Evaluation Dimensions.}
Our evaluation is organized along two dimensions. \textit{Representativeness} measures how well the structural features computed on a resulting possible world approximate their expected values over the uncertain graph. \textit{Refinement quality} measures how far our methods optimize the possible world.

\noindent\textbf{Metrics.} 
For \textit{Representativeness}, we adopt five structural features as the metrics, and organize them into three groups by their relationship to the number of common neighbors, the feature our methods optimize. The first group, \textit{common neighbor-defined features}, consists of features whose definition contains this number directly. The second group, \textit{common neighbor-derived features}, consists of features whose definition does not contain this number directly but are instead governed by intermediate structures that themselves derive from common neighbors. The third group, \textit{global features}, consists of global features that bear no direct algebraic relationship to common neighbors. This organization lets us examine whether the benefit of optimizing common neighbors extends to features derived from them and, further, to features unrelated to them.

\noindent\underline{Common Neighbor-defined Features.} This group contains two features. (1) \textbf{The number of common neighbors} (NCN) of a node pair $(u,v)$ is $|N(u) \cap N(v)|$, the feature our methods directly optimize. (2) \textbf{The Jaccard coefficient} of $(u,v)$ is $|N(u) \cap N(v)| / |N(u) \cup N(v)|$ \cite{kogge2016jaccard}, whose numerator is the number of common neighbors. 

\noindent\underline{Common Neighbor-derived Features.} 
This group contains two features. (3) \textbf{The trussness} of an edge, from the $k$-truss decomposition \cite{cohen2008trusses}, is the largest $k$ such that the edge belongs to a subgraph where every edge lies in at least $k-2$ triangles. The number of triangles on an edge equals the number of common neighbors of its two endpoints, so the trussness derives from common neighbors. (4) \textbf{The clustering coefficient} of a node $v$ is $\frac{2 E_v}{d_v(d_v-1)}$ \cite{watts1998collective}, where $E_v$ is the number of edges among its neighbors and $d_v$ is its degree; it depends on triangles and thus is derived from common neighbors.

\noindent\underline{Global Features.} This group contains (5) \textbf{The shortest-path distance} (SP distance). The feature is measured over pairs of reachable nodes, which is a  global feature with no direct algebraic relationship to the number of common neighbors.

For \textit{Refinement Quality}, we measure the refinement quality of a possible world by its mean absolute discrepancy (MAD). For a possible world $G \sqsubseteq \mathcal{G}$, MAD is the total discrepancy of $G$ (Definition~\ref{def:tcn_dis}) divided by the number of edges, given by $\mathrm{MAD}(G) = \frac{dis_{\mathcal{G}}(G)}{|\mathcal{E}|}$. A smaller MAD indicates higher refinement quality.

\noindent\textbf{Default Parameters.}
Unless otherwise specified, each method uses the following settings. For P+RSR and P+IR, the number of iterations is set to $k = |\mathcal{E}|$. For P+BIR, the quality threshold and the confidence threshold are set to $\tau_q = 0.02$ and $\tau_c = 0.9$. ADR \cite{parchas2014pursuit} and TRPW \cite{song2016triangle} run $10 \sum_{e \in \mathcal{E}} p(e)$ iterations, following the settings in their papers. ABM \cite{parchas2014pursuit} has no iteration count to set.

\subsection{Effectiveness Evaluation}
\label{sec:exp_effectiveness}

We evaluate each method by the residual between the feature distribution of the possible world it produces and the \textit{expected} feature distribution over the uncertain graph $\mathcal{G}$, where \textit{expected} denotes the feature distribution that $\mathcal{G}$ induces in expectation. The former distribution is obtained directly: each method outputs a single deterministic possible world, on which we compute the feature value of every node (or node pair, or edge) and tally these values into a distribution. For example, for the common neighbor count, we count, over all node pairs in the output possible world, how many pairs share $0$ common neighbors, how many share $1$, and so on. The \textit{expected} distribution is obtained as follows. For the common neighbor count, it can be computed exactly, since the expected count of a node pair equals a sum of products of edge probabilities (Definition~\ref{def:exp_cn}); we tally these expected counts into a distribution in the same way. The remaining features cannot compute expectation directly, so we estimate their \textit{expected} distributions by Monte Carlo over $1000$ possible worlds sampled from $\mathcal{G}$ \cite{parchas2014pursuit, potamias2010knn}, averaging the resulting distributions. We then partition the range of each feature into five equal-width bins and sum the counts falling in each bin, yielding a five-bin distribution for both the method and \textit{expected}. For the $j$-th bin, let $c_j$ and $\hat{c}_j$ be the summed counts in the distribution of a method and in \textit{expected}, respectively. The residual is $r_j = \left| \log_{10}\!\left( \frac{c_j + 1}{\hat{c}_j + 1} \right) \right|$, so $r_j = 0$ indicates exact agreement with \textit{expected}, and a lower residual curve over the five bins indicates a closer match. For the shortest-path distance, all-pairs computation is too costly on the full graph; following \cite{parchas2014pursuit, song2016triangle}, both the methods and the estimation are run on a forest-fire subgraph \cite{leskovec2006sampling} instead.

\noindent\textbf{Common Neighbor-defined Features.}
This group comprises the number of common neighbors and the Jaccard coefficient.

Figure~\ref{fig:cn} and Table~\ref{tab:cn} report the results of the number of common neighbors. Table~\ref{tab:cn} lists the mean $E(r)$ and variance $var(r)$ of the residual over each dataset for all methods, where the three shades of red, from dark to light, mark the lowest, second-lowest, and third-lowest values in each row; the same convention applies to all subsequent tables. Across all four datasets, the residual curves of our three methods stay close to $r_j = 0$, and their $E(r)$ in Table~\ref{tab:cn} is consistently the lowest, followed by ADR and TRPW, with ABM the worst. This confirms that directly optimizing the number of common neighbors yields a possible world whose common neighbor distribution faithfully reflects $\overline{C}_{\mathcal{G}}(u,v)$. Averaged over all datasets, P+IR and P+BIR stay within $0.12$ of P+RSR, showing that the integer-based refinement preserves representativeness while improving efficiency.

Figure~\ref{fig:jaccard} and Table~\ref{tab:jaccard} report the results of the jaccard coefficient. Our three methods again stay close to \textit{Expected}, and their $E(r)$ in Table~\ref{tab:jaccard} is the lowest on most datasets. The integer-based refinements P+IR and P+BIR trail P+RSR by a small margin, and on some datasets even surpass it, confirming that they preserve representativeness while improving efficiency.

\begin{figure}
  \setlength{\abovecaptionskip}{0.1cm}
    \centering
    \includegraphics[width=0.95\linewidth]{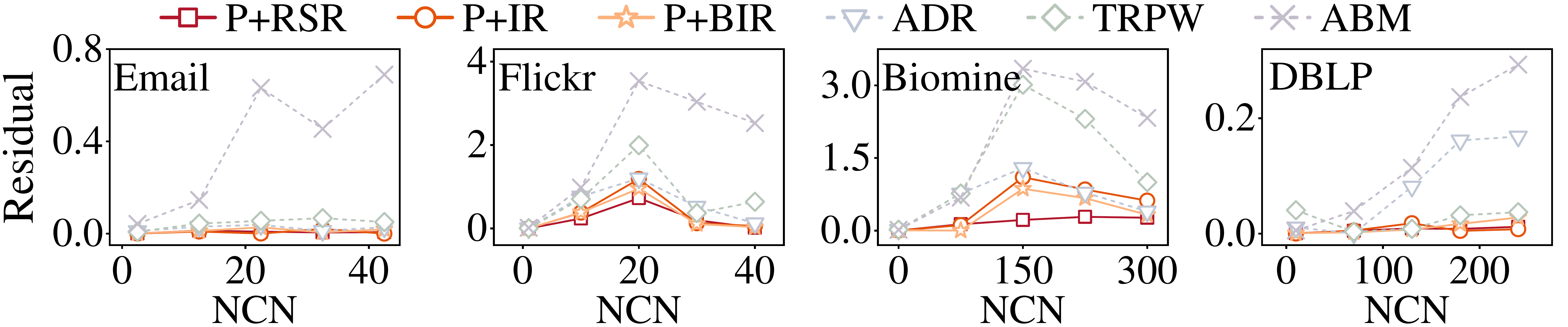}
    \vspace{-0.1cm}
    \caption{Residual of NCN.}
    \label{fig:cn}
    \vspace{-0.4cm}
\end{figure}

\begin{table}[]

  \small
  \caption{Mean $E(r)$ and variance $var(r)$ of the residual on NCN.}
  \label{tab:cn}
  \vspace{-0.4cm}
  \resizebox{\columnwidth}{!}{
    \begin{tabular}{cc||c|c|c|c|c|c}
      \multicolumn{2}{c||}{\textbf{Methods} $\rightarrow$}                           & \textbf{P+RSR} & \textbf{P+IR} & \textbf{P+BIR} & \textbf{ADR} & \textbf{TRPW} & \textbf{ABM} \\ \hline \hline
      \multicolumn{1}{c|}{\multirow{2}{*}{Email}}   & E(r)   & \cellcolor{ranktwo} \SI{6.62E-03}{}     &  \cellcolor{rankone} \SI{6.10E-03}{}    & \cellcolor{rankthree} \SI{1.27E-02}{}     & \SI{2.36E-02}{}   & \SI{4.50E-02}{}    & \SI{3.93E-01}{}   \\ \cline{2-8} 
      \multicolumn{1}{c|}{}                         & var(r) & \cellcolor{rankone} \SI{1.95E-5}{}     & \cellcolor{ranktwo} \SI{7.85E-05}{}    & \cellcolor{rankthree} \SI{1.04e-4}{}     & \SI{1.36e-4}{}  & \SI{4.94e-4}{}   & \SI{8.32e-2}{}  \\ \hline
      \multicolumn{1}{c|}{\multirow{2}{*}{Flickr}}  & E(r)   & \cellcolor{rankone} \SI{2.36E-01}{}     & \cellcolor{rankthree} \SI{3.50E-01}{}    & \cellcolor{ranktwo} \SI{2.96E-01}{}     & \SI{5.25E-01}{}   & \SI{7.43E-01}{}    & \SI{2.02}{}   \\ \cline{2-8} 
      \multicolumn{1}{c|}{}                         & var(r) & \cellcolor{rankone} \SI{8.63E-02}{}     & \cellcolor{rankthree}\SI{2.39E-01}{}    & \cellcolor{ranktwo} \SI{1.58E-01}{}     & \SI{2.36E-01}{}  & \SI{5.67E-01}{}   & \SI{2.17}{}  \\ \hline
      \multicolumn{1}{c|}{\multirow{2}{*}{Biomine}} & E(r)   & \cellcolor{rankone} \SI{1.81e-1}{}     & \cellcolor{rankthree} \SI{5.37e-1}{}    & \cellcolor{ranktwo} \SI{3.74e-1}{}     & \SI{6.53e-1}{}   & \SI{1.422}{} & \SI{1.890}{}   \\ \cline{2-8} 
      \multicolumn{1}{c|}{}                         & var(r) & \cellcolor{rankone} \SI{1.36e-2}{}     & \cellcolor{rankthree}\SI{2.23e-1}{}    & \cellcolor{ranktwo} \SI{1.51e-1}{}     & \SI{2.27e-1}{}  & \SI{1.47}{}   & \SI{2.17}{}  \\ \hline
      \multicolumn{1}{c|}{\multirow{2}{*}{DBLP}}    & E(r)   & \cellcolor{rankone} \SI{6.88E-03}{}     & \cellcolor{ranktwo} \SI{7.16E-03}{}    & \cellcolor{rankthree} \SI{1.10E-02}{}     & \SI{8.45E-02}{}   &  \SI{2.42E-02}{}    &  \SI{1.38E-01}{}   \\ \cline{2-8} 
      \multicolumn{1}{c|}{}                         & var(r) & \cellcolor{rankone} \SI{1.87e-5}{}     & \cellcolor{ranktwo}\SI{4.37e-5}{}    & \cellcolor{rankthree} \SI{1.29e-4}{}     & \SI{6.38e-3}{}  & \SI{2.98e-4}{}   & \SI{1.54e-2}{}
    \end{tabular}
  }
  \vspace{-0.4cm}
\end{table}

\begin{figure}
  \setlength{\abovecaptionskip}{0.1cm}
    \centering
    \includegraphics[width=0.95\linewidth]{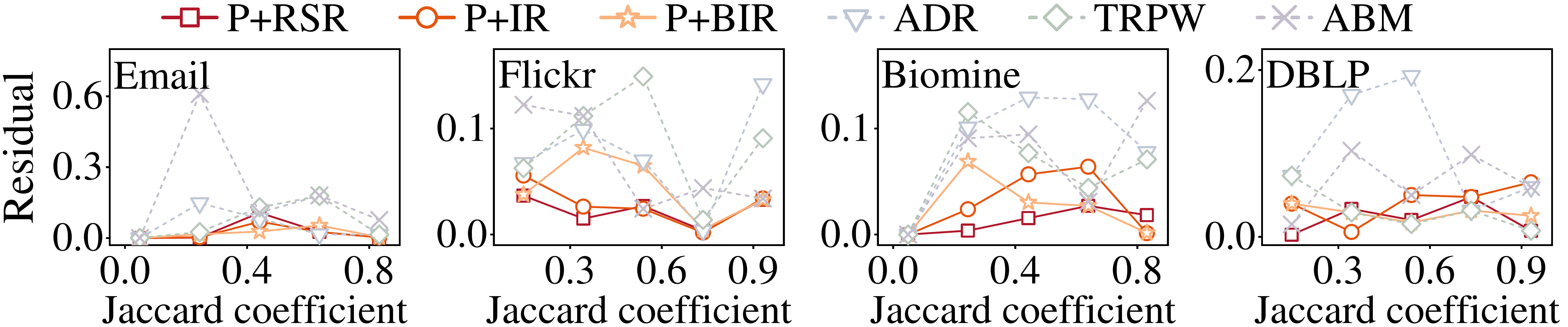}
    \vspace{-0.1cm}
    \caption{Residual of the Jaccard coefficient.}
    \label{fig:jaccard}
    \vspace{-0.4cm}
\end{figure}

\begin{table}[]

  \small
  \caption{Mean $E(r)$ and variance $var(r)$ of the residual on the Jaccard coefficient.}
  \label{tab:jaccard}
  \vspace{-0.4cm}
  \resizebox{\columnwidth}{!}{
    \begin{tabular}{cc||c|c|c|c|c|c}
      \multicolumn{2}{c||}{\textbf{Methods} $\rightarrow$}                           & \textbf{P+RSR} & \textbf{P+IR} & \textbf{P+BIR} & \textbf{ADR} & \textbf{TRPW} & \textbf{ABM} \\ \hline \hline
      \multicolumn{1}{c|}{\multirow{2}{*}{Email}}   & E(r)   & \cellcolor{rankthree} \SI{2.75E-2}{}     & \cellcolor{ranktwo} \SI{2.03E-2}{}     & \cellcolor{rankone} \SI{2.02E-2}{}      & \SI{5.22E-2}{}    & \SI{6.97E-2}{}     & \SI{1.95E-1}{}    \\ \cline{2-8} 
      \multicolumn{1}{c|}{}                         & var(r) & \cellcolor{rankthree} \SI{2.09E-3}{}     & \cellcolor{ranktwo} \SI{8.61E-4}{}     & \cellcolor{rankone} \SI{4.66E-4}{}      & \SI{3.89E-3}{}    & \SI{6.34E-3}{}     & \SI{5.80E-2}{}  \\ \hline
      \multicolumn{1}{c|}{\multirow{2}{*}{Flickr}}  & E(r)   & \cellcolor{rankone} \SI{2.31E-2}{}     & \cellcolor{ranktwo} \SI{2.85E-2}{}     & \cellcolor{rankthree} \SI{4.45E-2}{}      &  \SI{7.64E-2}{}    & \SI{8.57E-2}{}     & \SI{6.74E-2}{}    \\ \cline{2-8} 
      \multicolumn{1}{c|}{}                         & var(r) & \cellcolor{rankone} \SI{1.82E-4}{}     & \cellcolor{ranktwo} \SI{3.76E-4}{}     & \cellcolor{rankthree} \SI{8.93E-4}{}      &  \SI{2.55E-3}{}    & \SI{2.61E-3}{}     & \SI{2.14E-3}{}  \\ \hline
      \multicolumn{1}{c|}{\multirow{2}{*}{Biomine}} & E(r)   & \cellcolor{rankone} \SI{1.28E-2}{}     & \cellcolor{rankthree} \SI{2.92E-2}{}     &  \cellcolor{ranktwo} \SI{2.56E-2}{}      & \SI{8.72E-2}{}    & \SI{6.14E-2}{}     & \SI{6.85E-2}{}   \\ \cline{2-8} 
      \multicolumn{1}{c|}{}                         & var(r) & \cellcolor{rankone} \SI{1.22E-4}{}     & \cellcolor{rankthree} \SI{9.09E-4}{}     & \cellcolor{ranktwo} \SI{7.83E-4}{}      & \SI{2.82E-3}{}    & \SI{1.84E-3}{}     & \SI{2.65E-3}{}  \\ \hline
      \multicolumn{1}{c|}{\multirow{2}{*}{DBLP}}    & E(r)   & \cellcolor{rankone} \SI{2.22E-2}{}     & \cellcolor{rankthree} \SI{4.17E-2}{}     & \cellcolor{ranktwo} \SI{2.85E-2}{}      & \SI{1.06E-1}{}    & \SI{3.13E-2}{}     & \SI{6.53E-2}{}   \\ \cline{2-8} 
      \multicolumn{1}{c|}{}                         & var(r) & \cellcolor{ranktwo} \SI{3.47E-4}{}     & \cellcolor{rankthree} \SI{4.88E-4}{}     & \cellcolor{rankone} \SI{7.04E-5}{}      & \SI{5.11E-3}{}    & \SI{6.49E-4}{}     & \SI{1.34E-3}{}  \\ 
    \end{tabular}
  }
    \vspace{-0.4cm}
  
\end{table}

\noindent\textbf{Common Neighbor-derived Features.}
This group comprises the trussness and the clustering coefficient.

Figure~\ref{fig:trussness} and Table~\ref{tab:Trussness} report the results of the trussness. In Figure~\ref{fig:trussness}, the residual curves of our three methods and TRPW all stay close to $r_j = 0$, far below those of ADR and ABM. Moreover, in terms of the statistics in Table~\ref{tab:Trussness}, our three methods slightly outperform TRPW: averaging $E(r)$ across the four datasets, P+RSR, P+IR, and P+BIR attain $0.045$, $0.075$, and $0.054$, all below the $0.084$ of TRPW.

Figure~\ref{fig:clustering} and Table~\ref{tab:Clustering} report the results of the clustering coefficient. The results mirror those for trussness: in Figure~\ref{fig:clustering}, the residual curves of our three methods and TRPW far below those of ADR and ABM. While the average $E(r)$ across the four datasets of our methods slightly outperform TRPW: P+RSR, P+IR, and P+BIR attain $0.036$, $0.046$, and $0.044$, all below the $0.049$ of TRPW.

These results show that preserving the number of common neighbors helps keep the features derived from it close to \textit{Expected}.

\begin{figure}
  \setlength{\abovecaptionskip}{0.1cm}
    \centering
    \includegraphics[width=0.95\linewidth]{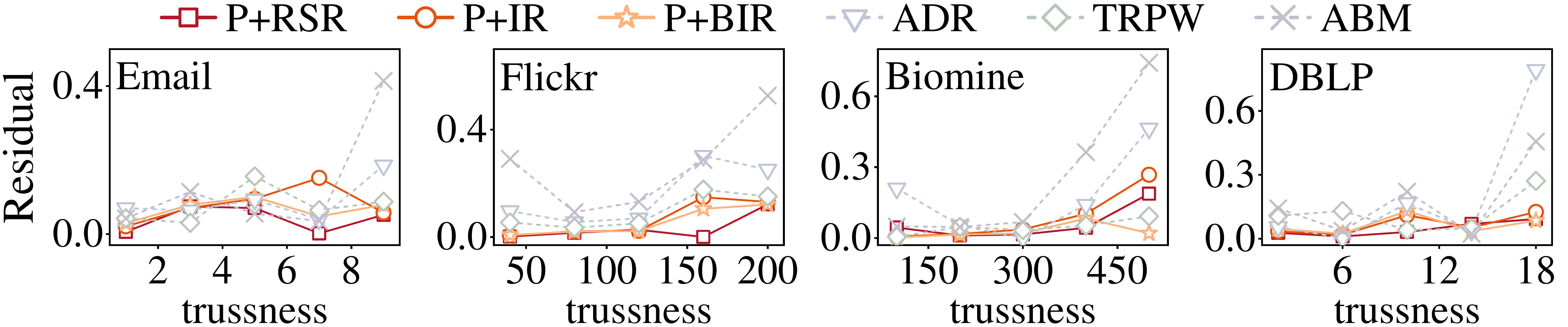}
    \vspace{-0.1cm}
    \caption{Residual of the trussness.}
    \label{fig:trussness}
    \vspace{-0.4cm}
\end{figure}

\begin{table}[]

  \small
  \caption{Mean $E(r)$ and variance $var(r)$ of the residual on the trussness}
  \label{tab:Trussness}
  \vspace{-0.4cm}
  \resizebox{\columnwidth}{!}{
    \begin{tabular}{cc||c|c|c|c|c|c}
      \multicolumn{2}{c||}{\textbf{Methods} $\rightarrow$}                           & \textbf{P+RSR} & \textbf{P+IR} & \textbf{P+BIR} & \textbf{ADR} & \textbf{TRPW} & \textbf{ABM} \\ \hline \hline
      \multicolumn{1}{c|}{\multirow{2}{*}{Email}}   & E(r)   & \cellcolor{rankone} \SI{4.09E-2}{}     & \SI{7.92E-2}{}     & \cellcolor{ranktwo}\SI{6.71E-2}{}      & \SI{9.04E-2}{}    & \cellcolor{rankthree}\SI{7.63E-2}{}     & \SI{1.32E-1}{}    \\ \cline{2-8} 
      \multicolumn{1}{c|}{}                         & var(r) & \cellcolor{ranktwo} \SI{1.22E-3}{}     & \cellcolor{rankthree} \SI{2.36E-3}{}     & \cellcolor{rankone} \SI{8.06E-4}{}      & \SI{3.20E-3}{}    & \SI{2.46E-3}{}     & \SI{2.58E-2}{}  \\ \hline
      \multicolumn{1}{c|}{\multirow{2}{*}{Flickr}}  & E(r)   &  \cellcolor{rankone} \SI{3.37E-2}{}    & \cellcolor{rankthree} \SI{6.61E-2}{}     & \cellcolor{ranktwo} \SI{5.56E-2}{}      & \SI{1.55E-1}{}    & \SI{9.37E-2}{}     & \SI{2.67E-1}{}    \\ \cline{2-8} 
      \multicolumn{1}{c|}{}                         & var(r) &  \cellcolor{rankone} \SI{2.56E-3}{}   & \SI{4.61E-3}{}     & \cellcolor{ranktwo} \SI{2.85E-3}{}      & \SI{1.29E-2}{}    & \cellcolor{rankthree} \SI{4.24E-3}{}     & \SI{2.93E-2}{}  \\ \hline
      \multicolumn{1}{c|}{\multirow{2}{*}{Biomine}} & E(r)   &  \cellcolor{rankthree} \SI{6.04E-2}{}     & \SI{8.63E-2}{}     & \cellcolor{rankone} \SI{2.80E-2}{}      & \SI{1.74E-1}{}    & \cellcolor{ranktwo} \SI{4.64E-2}{}     & \SI{2.54E-1}{}   \\ \cline{2-8} 
      \multicolumn{1}{c|}{}                         & var(r) &  \cellcolor{rankthree} \SI{5.32E-3}{}     & \SI{1.17E-2}{}     & \cellcolor{ranktwo} \SI{9.94E-4}{}      & \SI{3.27E-2}{}    & \cellcolor{rankone} \SI{9.93E-4}{}     & \SI{9.25E-2}{}  \\ \hline
      \multicolumn{1}{c|}{\multirow{2}{*}{DBLP}}    & E(r)   &  \cellcolor{rankone} \SI{4.65E-2}{}  & \cellcolor{rankthree} \SI{6.87E-2}{}     & \cellcolor{ranktwo} \SI{6.41E-2}{}     & \SI{2.11E-1}{}    & \SI{1.19E-1}{}     & \SI{1.74E-1}{}   \\ \cline{2-8} 
      \multicolumn{1}{c|}{}                         & var(r) &  \cellcolor{rankone} \SI{1.12E-3}{}    & \cellcolor{rankthree} \SI{2.23E-3}{}     & \cellcolor{ranktwo} \SI{1.90E-3}{} & \SI{1.09E-1}{}    & \SI{8.67E-3}{}     & \SI{3.11E-2}{}  \\ 
    \end{tabular}
  }
  
\end{table}

\begin{figure}
  \setlength{\abovecaptionskip}{0.1cm}
    \centering
    \includegraphics[width=0.95\linewidth]{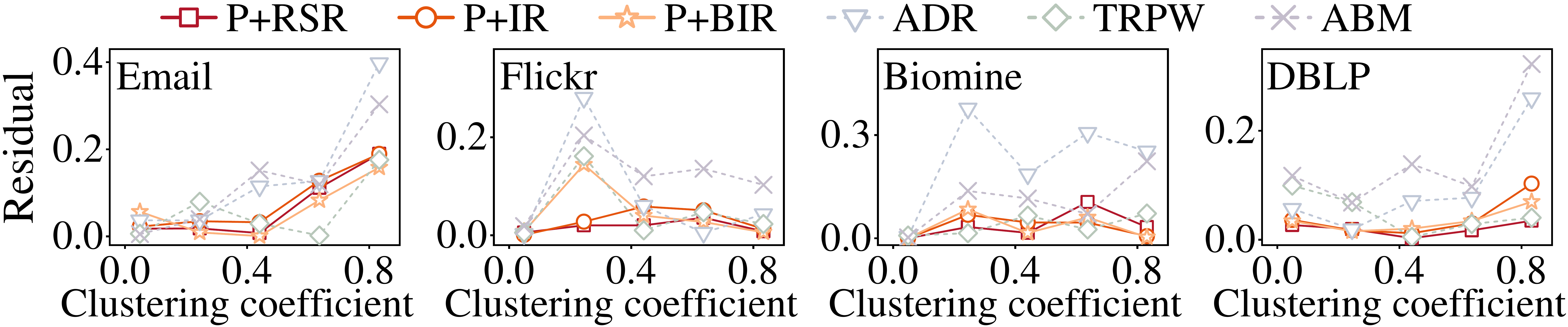}
    \vspace{-0.1cm}
    \caption{Residual of the clustering coefficient}
    \label{fig:clustering}
    \vspace{-0.4cm}
\end{figure}

\begin{table}[]

  \small
  \caption{Mean $E(r)$ and variance $var(r)$ of the residual on the clustering coefficient}
  \label{tab:Clustering}
  \vspace{-0.4cm}
  \resizebox{\columnwidth}{!}{
    \begin{tabular}{cc||c|c|c|c|c|c}
      \multicolumn{2}{c||}{\textbf{Methods} $\rightarrow$}                           & \textbf{P+RSR} & \textbf{P+IR} & \textbf{P+BIR} & \textbf{ADR} & \textbf{TRPW} & \textbf{ABM} \\ \hline \hline
      \multicolumn{1}{c|}{\multirow{2}{*}{Email}}   & E(r)   & \cellcolor{rankthree} \SI{6.90E-2}{}     & \SI{8.15E-2}{}     & \cellcolor{rankone} \SI{6.16E-2}{}      & \SI{1.43E-1}{}    & \cellcolor{ranktwo} \SI{5.86E-2}{}     & \SI{1.24E-1}{}    \\ \cline{2-8} 
      \multicolumn{1}{c|}{}                         & var(r) & \SI{6.32E-3}{}     & \cellcolor{rankthree} \SI{5.46E-3}{}     & \cellcolor{rankone} \SI{4.05E-3}{}      & \SI{2.21E-2}{}    & \cellcolor{ranktwo} \SI{5.25E-3}{}     & \SI{1.35E-2}{}  \\ \hline
      \multicolumn{1}{c|}{\multirow{2}{*}{Flickr}}  & E(r)   & \cellcolor{rankone} \SI{1.81E-2}{}     & \cellcolor{ranktwo} \SI{3.04E-2}{}     & \cellcolor{rankthree} \SI{4.52E-2}{}      & \SI{7.84E-2}{}    & \SI{4.95E-2}{}     & \SI{1.16E-1}{}    \\ \cline{2-8} 
      \multicolumn{1}{c|}{}                         & var(r) & \cellcolor{rankone} \SI{1.45E-4}{}     & \cellcolor{ranktwo} \SI{6.01E-4}{}     & \cellcolor{rankthree} \SI{3.28E-3}{}      & \SI{1.34E-2}{}    & \SI{4.17E-3}{}     & \SI{4.44E-3}{}  \\ \hline
      \multicolumn{1}{c|}{\multirow{2}{*}{Biomine}} & E(r)   & \SI{3.74E-2}{}     & \cellcolor{ranktwo} \SI{3.32E-2}{}     & \cellcolor{rankone} \SI{3.28E-2}{}      &  \SI{2.25E-1}{}    & \cellcolor{rankthree} \SI{3.72E-2}{}     & \SI{1.10E-1}{}   \\ \cline{2-8} 
      \multicolumn{1}{c|}{}                         & var(r) & \SI{1.63E-3}{}     & \cellcolor{rankone} \SI{8.44E-4}{}     & \cellcolor{rankthree} \SI{1.41E-3}{}      &  \SI{2.02E-2}{}    & \cellcolor{ranktwo} \SI{8.81E-4}{}     & \SI{6.71E-3}{}  \\ \hline
      \multicolumn{1}{c|}{\multirow{2}{*}{DBLP}}    & E(r)   & \cellcolor{rankone} \SI{2.02E-2}{}     & \cellcolor{rankthree} \SI{3.97E-2}{}     & \cellcolor{ranktwo} \SI{3.45E-2}{}      & \SI{9.68E-2}{}    & \SI{4.85E-2}{}     & \SI{1.49E-1}{}   \\ \cline{2-8} 
      \multicolumn{1}{c|}{}                         & var(r) & \cellcolor{rankone} \SI{1.43E-4}{}     & \SI{1.35E-3}{}     & \cellcolor{ranktwo} \SI{4.49E-4}{}      & \SI{8.82E-3}{}    & \cellcolor{rankthree} \SI{1.34E-3}{}     & \SI{1.01E-2}{}  \\ 
    \end{tabular}
  }
  
    \vspace{-0.4cm}
\end{table}

\noindent\textbf{Global Features.}
Figure~\ref{fig:sp} and Table~\ref{tab:ShortestPath} report the results of the shortest-path distance. In Figure~\ref{fig:sp}, it is obvious that no method keeps its advantage across all datasets, and the statistics in Table~\ref{tab:ShortestPath} confirm this: P+RSR attains the lowest $E(r)$ on Email and Biomine, TRPW on Flickr, and ABM on DBLP. Overall, although the leading method varies, our methods still achieve good results across datasets. This shows that even on the shortest-path distance, a global feature with no direct relationship to the number of common neighbors, our methods remain competitive.

\begin{figure}
  \setlength{\abovecaptionskip}{0.1cm}
    \centering
    \includegraphics[width=0.95\linewidth]{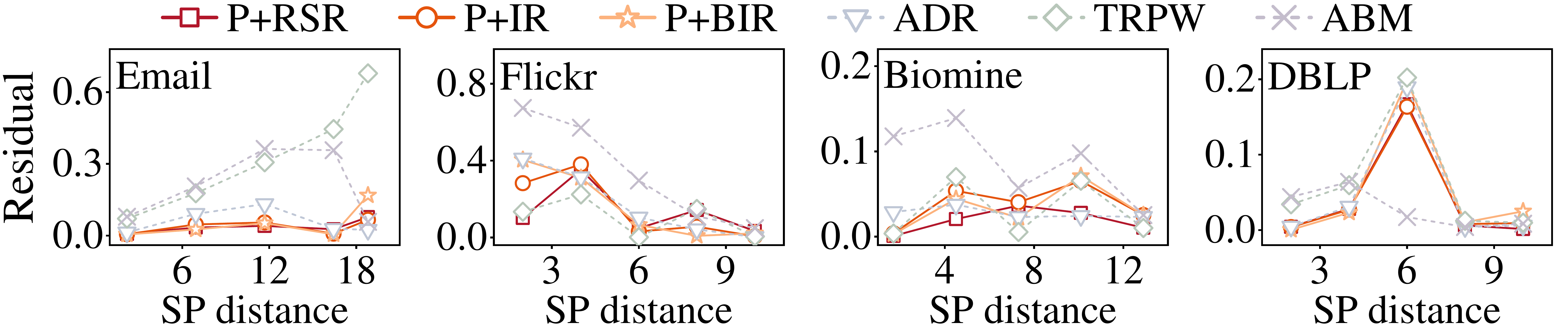}
    \vspace{-0.1cm}
    \caption{Residual of SP distance}
    \label{fig:sp}
    \vspace{-0.4cm}
\end{figure}

\begin{table}[]

  \small
  \caption{Mean $E(r)$ and variance $var(r)$ of the residual on the SP distance}
  \label{tab:ShortestPath}
  \vspace{-0.4cm}
  \resizebox{\columnwidth}{!}{
    \begin{tabular}{cc||c|c|c|c|c|c}
      \multicolumn{2}{c||}{\textbf{Methods} $\rightarrow$}                           & \textbf{P+RSR} & \textbf{P+IR} & \textbf{P+BIR} & \textbf{ADR} & \textbf{TRPW} & \textbf{ABM} \\ \hline \hline
      \multicolumn{1}{c|}{\multirow{2}{*}{Email}}   & E(r)   & \cellcolor{rankone} \SI{3.68E-2}{}     & \cellcolor{ranktwo} \SI{3.64E-2}{}     & \cellcolor{rankthree} \SI{5.13E-2}{}      & \SI{5.77E-2}{}    & \SI{3.36E-1}{}     & \SI{2.12E-1}{}    \\ \cline{2-8} 
      \multicolumn{1}{c|}{}                         & var(r) & \cellcolor{rankone} \SI{6.62E-4}{}     & \cellcolor{ranktwo} \SI{7.64E-4}{}     & \cellcolor{rankthree} \SI{4.65E-3}{}      & \SI{2.64E-3}{}    & \SI{5.64E-2}{}     & \SI{2.13E-2}{}  \\ \hline
      \multicolumn{1}{c|}{\multirow{2}{*}{Flickr}}  & E(r)   &  \cellcolor{rankthree} \SI{1.36E-1}{}     & \SI{1.51E-1}{}     & \cellcolor{ranktwo} \SI{1.2E-1}{}      & \SI{1.74E-1}{}    & \cellcolor{rankone} \SI{1.03E-1}{}     & \SI{3.40E-1}{}    \\ \cline{2-8} 
      \multicolumn{1}{c|}{}                         & var(r) & \cellcolor{ranktwo} \SI{1.66E-2}{}     & \cellcolor{rankthree} \SI{2.85E-2}{}     &  \SI{3.32E-2}{}      & \SI{3.18E-2}{}    & \cellcolor{rankone} \SI{9.32E-3}{}     & \SI{7.61E-2}{}  \\ \hline
      \multicolumn{1}{c|}{\multirow{2}{*}{Biomine}} & E(r)   & \cellcolor{rankone} \SI{1.92E-2}{}     & \SI{3.77E-2}{}     & \SI{3.31E-2}{}      & \cellcolor{ranktwo} \SI{2.73E-2}{}    & \cellcolor{rankthree} \SI{3.09E-2}{}     & \SI{8.74E-2}{}   \\ \cline{2-8} 
      \multicolumn{1}{c|}{}                         & var(r) & \cellcolor{ranktwo} \SI{1.97E-4}{}     & \cellcolor{rankthree} \SI{5.80E-4}{}     & \SI{6.69E-4}{}      & \cellcolor{rankone} \SI{3.76E-5}{}    & \SI{1.14E-3}{}     & \SI{2.12E-3}{}  \\ \hline
      \multicolumn{1}{c|}{\multirow{2}{*}{DBLP}}    & E(r)   & \cellcolor{ranktwo} \SI{4.13E-2}{}     & \cellcolor{rankthree} \SI{4.23E-2}{}     & \SI{5.15E-2}{}      & \SI{4.71E-2}{}    & \SI{6.36E-2}{}     & \cellcolor{rankone} \SI{2.75E-2}{}   \\ \cline{2-8} 
      \multicolumn{1}{c|}{}                         & var(r) & \cellcolor{rankthree} \SI{5.04E-3}{}     & \cellcolor{ranktwo} \SI{4.67E-3}{}     & \SI{6.87E-3}{}      & \SI{6.47E-3}{}    & \SI{6.41E-3}{}     & \cellcolor{rankone} \SI{6.52E-4}{}  \\ 
    \end{tabular}
  }
  
\end{table}

\subsection{Efficiency Evaluation}
\label{sec:exp_efficiency}

Figure~\ref{fig:efficiency_bar} reports the running time of all methods on the four datasets. Among the baselines, ABM is the fastest on all datasets, but at the price of its weak effectiveness (\S\ref{sec:exp_effectiveness}). ADR is also faster than our methods on most datasets, mainly because its iteration count is fixed at $10\sum_{e \in \mathcal{E}} p(e)$, far below ours on Flickr and Biomine ($1.4$M and $5.9$M versus $16.0$M and $34.0$M); on DBLP, where the two iteration counts become close ($12.4$M versus $17.2$M), this advantage disappears. TRPW is the slowest throughout: preserving both expected node degrees and expected triangle degrees incurs the heaviest per-iteration computation.

We then focus on our three methods. P+IR replaces the floating-point computation in each iteration of P+RSR with integer counting, which makes it $38.6\%$ faster than P+RSR on average. P+BIR builds on P+IR and additionally maintains the Beta distribution to terminate the refinement adaptively once the desired quality is reached. This additional maintenance raises the running time by $7.2\%$ on average over P+IR. In return, P+BIR no longer requires a manually specified iteration count, but instead lets the user specify the desired quality directly and stops on its own once that quality is attained.

\begin{figure}
  \setlength{\abovecaptionskip}{0.1cm}
    \centering
    \includegraphics[width=0.95\linewidth]{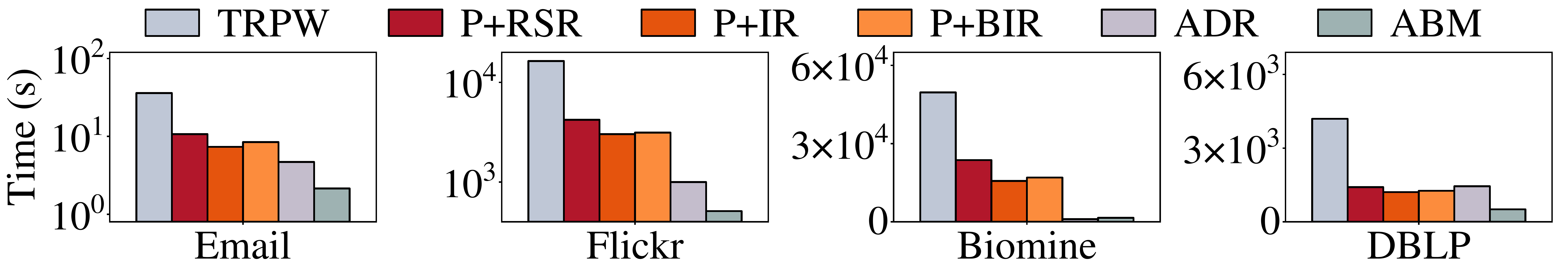}
    \caption{Running time of all methods}
    \label{fig:efficiency_bar}
    \vspace{-0.05cm}
\end{figure}

\subsection{Effectiveness of the Beta-based Termination}
\label{sec:exp_beta}

The Beta-based termination ends the integer-based refinement once the confidence that SEP $< \tau_q$ reaches $\tau_c$. Its effectiveness rests on the two following requirements. First, the confidence must be a faithful signal of refinement progress, so that the point at which it reaches $\tau_c$ indeed corresponds to a well-refined possible world; we examine this in the \textit{Reliability of the Confidence Signal}. Second, once the confidence is trustworthy, the two thresholds $\tau_q$ and $\tau_c$ must translate into predictable control over the refinement quality, so that the user can steer the output through them; we examine this in the \textit{Quality Control via Confidence Thresholds}.

\noindent\textbf{Reliability of the Confidence Signal.}
Figure~\ref{fig:confidence_iteration} reports how the confidence that SEP $< \tau_q$ evolves with the iteration count $k$ on the four datasets. The figure shows that the confidence cannot trigger the termination prematurely. For all datasets, the confidence stays negligible (as low as $10^{-200}$) for a long time, during which the refinement is still incomplete and many edges remain state-changeable. It then rises steeply, marking the point where enough state-changeable edges have been removed for the SEP to fall below $\tau_q$. The confidence is therefore a faithful signal of refinement progress, the first requirement for the termination to be effective.

\begin{figure}
  \setlength{\abovecaptionskip}{0.1cm}
    \centering
    \includegraphics[width=0.95\linewidth]{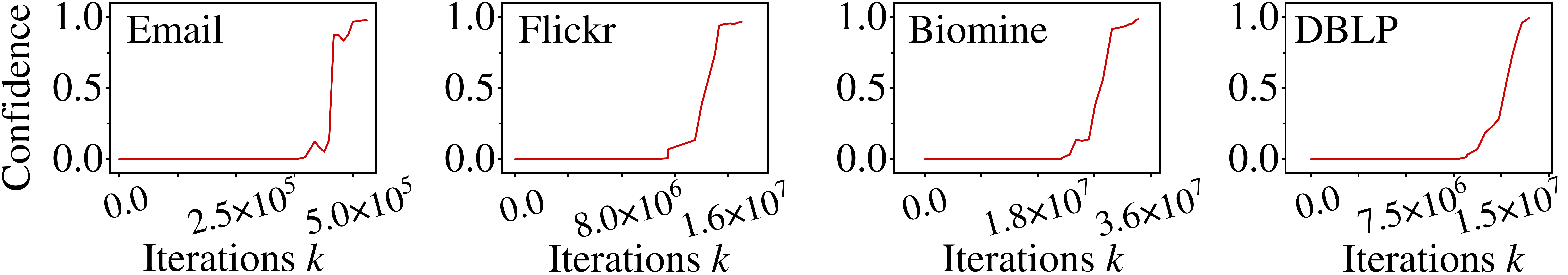}
    \caption{Confidence that SEP $< \tau_q$ as the refinement proceeds}
    \label{fig:confidence_iteration}
    \vspace{-0.05cm}
\end{figure}

\noindent\textbf{Quality Control via Thresholds.}
Figure~\ref{fig:tau_heatmap} reports the MAD of the resulting possible world under varying $\tau_q$ and $\tau_c$ on the four datasets. Across all datasets, the MAD increases with $\tau_q$ and decreases with $\tau_c$. The two trends follow from the roles of the two thresholds. $\tau_q$ is the threshold on the SEP at which the refinement may stop, so a larger $\tau_q$ permits more state-changeable edges to remain and stops the refinement earlier, leaving the possible world less refined and thus raising the MAD. $\tau_c$ is the confidence required before stopping, so a larger $\tau_c$ demands stronger evidence that SEP $< \tau_q$, which defers the stop to a more refined possible world and thus lowers the MAD. Both trends are monotonic, so a user can steer the refinement quality predictably through $\tau_q$ and $\tau_c$, which is the second requirement for the termination to be effective.

\begin{figure}
  \setlength{\abovecaptionskip}{0.1cm}
    \centering
    \includegraphics[width=0.98\linewidth]{figure/tao_mad.png}
    \caption{MAD under varying $\tau_q$ and $\tau_c$.}
    \label{fig:tau_heatmap}
    \vspace{-0.25cm}
\end{figure}

\subsection{Parameter Sensitivity}
\label{sec:exp_sensitivity}

\noindent\textbf{Effect of Iteration Count.}
We vary the iteration $k$ from $20\%$ to $100\%$ of $|\mathcal{E}|$ for each dataset and report the MAD (Fig.~\ref{fig:rsr-ir-mad}) and running time (Fig.~\ref{fig:rsr-ir-time}) of P+RSR and P+IR. 
As $k$ grows, the running time of both methods increases because each additional iteration performs one more refinement step, while their MAD decreases because the extra iterations give the refinement more chances to change edge states that reduce the total discrepancy. The MAD decrease slows as $k$ grows. This is because each iteration changes the state of a randomly selected edge, but the more iterations have been run, the fewer edges remain whose state change reduces the total discrepancy, so the reduction per iteration shrinks. Across the four datasets, P+IR raises the MAD by only $2.8\%$ on average while reducing the running time by $32.3\%$ on average, confirming that the integer rounding trades a small quality loss for a substantial efficiency gain.

\begin{figure}
  \setlength{\abovecaptionskip}{0.1cm}
    \centering
    \includegraphics[width=0.95\linewidth]{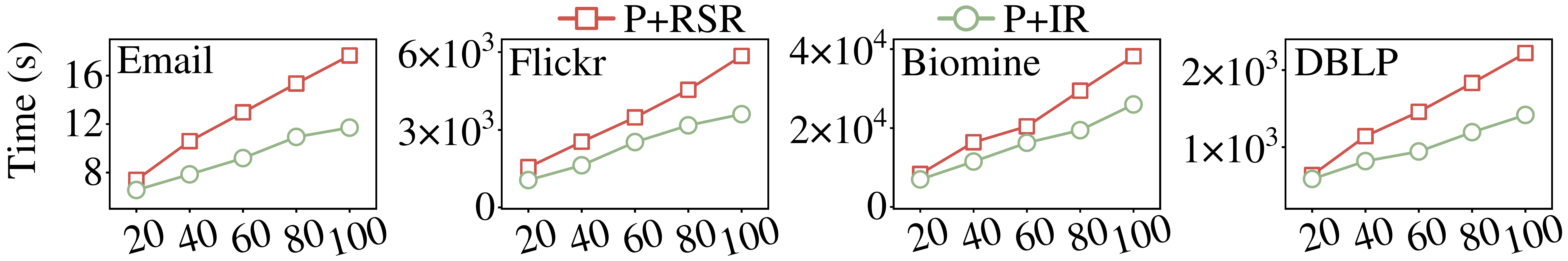}
    \caption{Time of P+RSR and P+IR under varying iteration count $k$ (from $20\%$ to $100\%$ of $|\mathcal{E}|$).}
    \label{fig:rsr-ir-time}
    \vspace{-0.05cm}
\end{figure}

\begin{figure}
  \setlength{\abovecaptionskip}{0.1cm}
    \centering
    \includegraphics[width=0.95\linewidth]{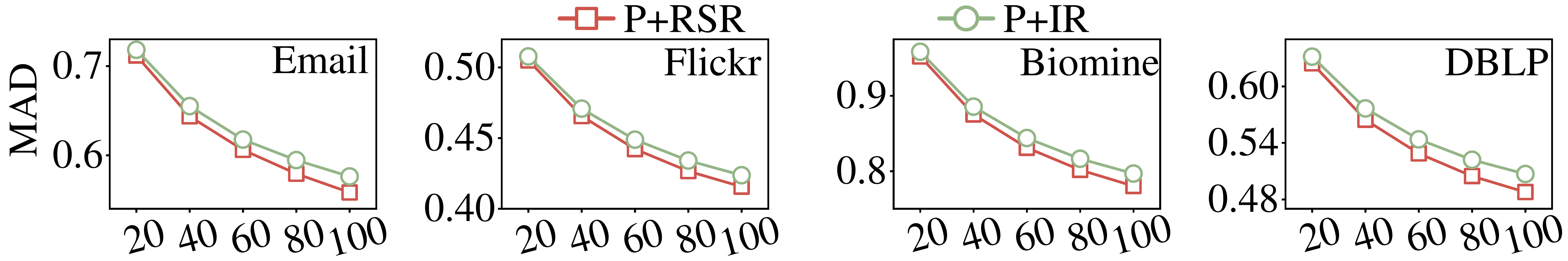}
    \caption{MAD of P+RSR and P+IR under varying iteration count $k$ (from $20\%$ to $100\%$ of $|\mathcal{E}|$).}
    \label{fig:rsr-ir-mad}
    \vspace{-0.05cm}
\end{figure}

\noindent\textbf{Effect of $\tau_q$ and $\tau_c$.}
Figure~\ref{fig:bir-time} reports the running time of P+BIR under varying $\tau_q$ and $\tau_c$ on four datasets; their effect on the MAD is examined in \S\ref{sec:exp_beta}. Across all datasets, the running time decreases as $\tau_q$ grows and increases as $\tau_c$ grows, and both trends hold regardless of the value of the other threshold. A larger $\tau_q$ relaxes the target on the SEP, so the termination criterion is met after fewer iterations, shortening the running time. A larger $\tau_c$ requires the criterion to hold more strictly before terminating, deferring the iteration at which it is first met and thus lengthening the running time.

\begin{figure}
  \setlength{\abovecaptionskip}{0.1cm}
    \centering
    \includegraphics[width=0.98\linewidth]{figure/tao_time.png}
    \caption{Time of P+BIR under varying $\tau_q$ and $\tau_c$.}
    \label{fig:bir-time}
    \vspace{-0.05cm}
\end{figure}

\section{Related Work}
\label{sec:related_work}

The research related to ours falls into two lines, reviewed below: querying and mining over uncertain graphs, and constructing representative possible worlds.

\vspace{0.1cm}
\noindent\textbf{Uncertain graph querying and mining.}
A wide range of querying and mining tasks has been studied over uncertain graphs. For querying, representative tasks include $s$-$t$ reliability \cite{ke2019depth}, $k$-nearest-neighbor search \cite{potamias2010knn}, and SP distance \cite{saha2021shortest}. For mining, classic deterministic problems such as core decomposition \cite{bonchi2014core, dai2021core} and truss decomposition \cite{zou2017truss} have been generalized to the uncertain setting. These tasks build on the \emph{possible world model}, under which an uncertain graph represents a probability distribution over its possible worlds.A common approach samples a set of possible worlds, executes the task on each, and aggregates the outputs to estimate the target result. However, sufficient estimation accuracy often requires many samples, incurring substantial overhead from possible-world generation and repeated task execution.

\vspace{0.1cm}
\noindent\textbf{Representative possible worlds.}
To overcome this limitation, the RPW is introduced. Parchas et al. \cite{parchas2014pursuit} first present the degree-based RPW, which preserve the expected degree of every node and characterizes the local connectivity of each individual node. Song et al. \cite{song2016triangle} further preserve the expected triangle degree of every node, as triangles are an important structure in many graph tasks such as clustering coefficient \cite{watts1998collective}.
However, all these RPWs preserve only per-node features, while ignoring the number of common neighbors, a node pair feature. Yet the number of common neighbors is a key feature underlying many mining tasks \cite{yao2016link, asmi2022efficient}. This motivates us to study CRPW problem in this paper.

\section{Conclusion}
\label{sec:conclusion}
We study the CRPW problem, which seek the possible world that best preserves expected common-neighbor counts, and establish its NP-hardness. We develop a two-stage algorithm combining quick initialization with iterative refinement. We accelerate the refinement by replacing its costly floating-point evaluation with cheap integer counting, and further design a Beta-based adaptive termination that allows the refinement to stop once the user-desired quality is reached. Experiments on four real-world uncertain graphs demonstrate the effectiveness of our methods, with the best overall performance on common-neighbor-related tasks.

\bibliographystyle{ACM-Reference-Format}
\bibliography{sample}

\end{document}